\theoremstyle{plain}
\newtheorem{thm}[]{Theorem}
\newtheorem{lem}[thm]{Lemma}
\newtheorem{cor}[thm]{Corollary}
\theoremstyle{definition}
\newtheorem{defn}[thm]{Definition}
\begin{document}
	
	\title{Dequantizability from inputs}
	\author{Tae-Won Kim}
	\affiliation{Department of Computer Engineering and Artifical Intelligence, Pukyong National University, Busan 48513, South Korea}
	%\orcid
	\author{Byung-Soo Choi}
	\email{Corresponding Author: bschoi@pknu.ac.kr}
	\homepage{https://sites.google.com/view/bschoi/}
	\affiliation{Department of Scientific Computing, Pukyong National University, Busan 48513, South Korea}
	\maketitle
	
	\begin{abstract}
		By comparing the constructions of block encoding given by \cite{kp17b,kp17a,csg19,gslw19}, we propose a way to extract \emph{dequantizability} from advancements in dequantization techniques that have been led by Tang, as in \cite{tang23}. Then we apply this notion to the sparse-access input model that is known to be \textsf{BQP}-complete in general, thereby conceived to be un-dequantizable. Our goal is to break down this belief by examining the sparse-access input model's instances, particularly their input matrices. In conclusion, this paper forms a dequantizability-verifying scheme that can be applied whenever an input is given. 
	\end{abstract}
	
	\section{Introduction}
	Quantum machine learning (QML) explores whether principles of quantum mechanics would improve machine learning, seeking the practical, real-world utility of quantum computers. QML was founded by the quantum linear system solver introduced in \cite{hhl09}, which demonstrated significant speedups, particularly exponential ones, when the input matrix is sparse, well-conditioned, and efficiently accessible by quantum algorithms using the bucket-brigade qRAM as in \cite{glm08}. Despite the skepticism noted in \cite{aar15} that was led by these numerous caveats and fine details associated, HHL appeared promising as a subroutine for future QML algorithms, prompting an atomic subroutine that unifies various QML algorithms and their source of speedup, as in \cite{lc19}. To this end, from \emph{block encoding}, constructed with regard to quantum access by \cite{csg19}, a framework of quantum singular value transformation (QSVT) was proposed by \cite{gslw19}, that may ``grand-unify'' quantum algorithms. 
	
	Meanwhile, with respect to quantum access that was justified by the data structure of \cite{kp17b, kp17a}, polynomially slower classical counterparts of QML algorithms were provided by sample-query access corresponding to the data structure, namely, \emph{dequantization}: eventually dequantizing the QSVT framework in \cite{bt23}. Results of dequantization are depicted in \cite[Figure 7]{tang23}, where it appears most QML algorithms have been dequantized, including the QSVT framework, suggesting that QML in general may not provide exponential speedup at all, quadratic being the best-case. 
	
	However, there are limitations to dequantization. Note that all dequantized QML algorithms only take input matrices that get normalized by the Frobenius norm to be prepared for a quantum state. So these dequantized QML algorithms are the ones that utilize a specific class of instances from quantum-accessible data structures within qRAM. Hence, the performance comparison between quantum and classical machine learning is reduced to comparing the construction cost of the data structure, as noted in \cite{jr23}. Also, since qRAM is not the only way to provide quantum-access, dequantization cannot capture query-access instances via circuits with some polylogarithmic depth, as in \cite{zly22}.  
	
	More crucially, we do not know whether the complement class of Frobenius-normalising instances of the quantum-accessible data structure would allow dequantization. Meaning that ``full'' QSVT is conceived to be un-dequantizable because it includes \emph{sparse-access input model}, as in the sparse linear systems solver of \cite{hhl09}, which is known to be \textsf{BQP}-complete. Therefore, in \cite[p.21]{tang23}, Tang posed the open problem of an alternative construction of block encoding or state preparation that may prevent dequantizaition, where such a complement class is mentioned as an `alternative data structure.' 
	
	\subsection{Main Ideas}
	Section \ref{sec2} questions whether such a component class is an \emph{independent} alternative data structure rather than one of the two inner models of the data structure. We state that if we assume that matrices are provided separately in separate data structures in the qRAM, a contradiction follows, which would be our primary result against the consensus in the construction of block encoding. We provided the following lemma, a negation of \cite[Corollary 5]{csg19}.
	
	\begin{lem}[Informal, Lemma \ref{transdisproof}]
		There is a matrix $A\in\mathbb{C}^{m\times n}$ stored in a quantum-accessible data structure such that no quantum algorithm
		performs $s_p$-normalizing state preparation of $A$ in $\textrm{\normalfont{polylog}}(mn/\epsilon)$ time.
	\end{lem}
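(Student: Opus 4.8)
The plan is to establish the statement by a direct counterexample combined with a quantum query lower bound, thereby \emph{negating} the blanket availability of $s_p$-normalizing state preparation claimed by \cite[Corollary 5]{csg19}. The starting observation is that a quantum-accessible data structure of the type in \cite{kp17a,kp17b,csg19} is built around one fixed normalization: its binary trees store the hierarchical $\ell_2$ weights $\sum_{j\in S}|a_{ij}|^2$ over column ranges $S$, which is exactly what lets one prepare, in $\mathrm{polylog}(mn)$ time, the row state $\frac{1}{\lVert a_{i,\cdot}\rVert_2}\sum_j a_{ij}\ket{j}$ and sample from $|a_{ij}|^2$. An $s_p$-normalizing preparation with $p\neq 1$ instead asks for the state proportional to $\sum_j a_{ij}^{\,p}\ket{j}$, whose normalization is the $s_{2p}$-weight $\lVert a_{i,\cdot}\rVert_{2p}^{2p}=\sum_j |a_{ij}|^{2p}$. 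Since no tree for these weights is stored, the task reduces to recovering an $\ell_{2p}$-scale statistic of a row from access that only exposes its $\ell_2$-scale structure, and the goal is to exhibit a single $A$ for which this recovery provably costs more than $\mathrm{polylog}(mn/\epsilon)$.

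For the construction I would take $A$ (with nonnegative real entries, so that $a_{ij}^{\,p}$ is unambiguous) to have one designated row whose entries form a multiset chosen so that, at every coarse node of the stored $\ell_2$-tree, the partial sum is pinned to a common value, while the fine-scale arrangement of mass is what controls $s_{2p}$. Concretely I would plant two sub-families that share identical row $\ell_2$-norms and identical partial $\ell_2$-sums down to some depth, yet whose $s_{2p}$-weights differ by a constant factor; the corresponding target states $\frac{1}{\sqrt{s_{2p}}}\sum_j a_{ij}^{\,p}\ket{j}$ then sit at trace distance bounded below by a constant. Producing the correct $s_p$-normalized state to accuracy $\epsilon$ forces the algorithm to distinguish the sub-families, i.e.\ to estimate the row's $2p$-th moment, and I would reduce this to unstructured search / frequency-moment estimation over the $n$ columns. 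By the optimality of Grover-type search (a hybrid or adversary argument against the entry-and-tree oracle), this needs $\Omega(\sqrt{n})=\omega(\mathrm{polylog}(mn/\epsilon))$ queries, so no quantum algorithm prepares the state of this $A$ in $\mathrm{polylog}(mn/\epsilon)$ time; equivalently, had \cite[Corollary 5]{csg19} furnished such a preparation for every stored $A$, we could solve search in polylogarithmic time, the contradiction announced in Section~\ref{sec2}.

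The step I expect to be the main obstacle is precisely this lower bound, because the stored partial-sum tree is powerful: it already locates nonzeros and samples from $|a_{ij}|^2$ in logarithmic time, so a naive planting is defeated by a binary search over tree nodes. The construction must therefore be adversarial against the \emph{full} data-structure oracle, ensuring that every quantity the structure reveals --- row norms, all coarse $\ell_2$ partial sums, and the sparsity pattern --- is identical across the two sub-families, so that the $s_{2p}$-distinguishing signal lives only at a scale that a sub-$\sqrt{n}$ number of queries cannot resolve. Two technical points then have to be controlled: the trace-distance gap between the target states must survive the $\epsilon$-approximation for a constant $\epsilon$, and the lower bound must be genuinely \emph{quantum}, ruling out the quadratic savings that amplitude estimation would otherwise grant, which is what pins the cost at $\Omega(\sqrt{n})$ rather than a merely logarithmic barrier.
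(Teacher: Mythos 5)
Your route is genuinely different from the paper's, and it also has a gap that is not a matter of detail. The paper's proof of Lemma \ref{transdisproof} constructs no counterexample matrix and proves no query lower bound. It is a short proof of negation that exploits the ambiguity of the phrase ``stored in a quantum-accessible data structure'': assuming \cite[Corollary 5]{csg19} and assuming that no algorithm performs the map (\ref{maptransdisproof}), contraposition gives that $A^{(p/2)}$ is not stored; but the same assumption is also read as ruling out the $\mu_p$ inner model having been initiated towards $A$, so by Theorem \ref{thm:innermodel} the minimum in the pass (\ref{mupass}) is attained by the Frobenius norm, and $A^{(p/2)}$ can nevertheless sit in qRAM via the $\mu_F$ inner model applied to $A^{(p/2)}$ itself --- a contradiction. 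The entire argument lives at the level of which of the two inner models was selected for which matrix; nothing resembling an adversary or hybrid bound appears anywhere in the paper.

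The concrete gap in your plan is that it attacks a hypothesis other than the one the lemma negates. \cite[Corollary 5]{csg19}, and the formal Lemma \ref{transdisproof}, assume that $A^{(p/2)}$ --- not $A$ --- is stored in the quantum-accessible data structure. In that case the binary trees hold the hierarchical sums $\sum_{j\in S}\lvert A_{ij}^{p/2}\rvert^2=\sum_{j\in S}\lvert A_{ij}\rvert^{p}$, i.e.\ precisely the $\ell_p^p$ partial weights you argue are unavailable, and the standard \cite{kp17a} preparation routine outputs the row state proportional to $\sum_j A_{ij}^{p/2}\ket{j}$ in polylogarithmic time by construction. Under the actual hypothesis there is therefore no room for a planted hard instance: the target state \emph{is} the $\ell_2$ row state of the stored matrix, and any two stored matrices agreeing on all tree nodes are the same matrix. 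Your construction implicitly replaces the hypothesis by ``only the $\ell_2$-tree of $A$ is available and one wants the $s_p$-state of $A$,'' which matches the informal wording but not the statement being disproved. Separately, even for that modified statement, the entire mathematical content would be the $\Omega(\sqrt{n})$ lower bound against the full tree-plus-entry oracle; you correctly identify this as the main obstacle but do not supply it, so as written the proposal is a plan for a proof rather than a proof.
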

	
	Meaning that, despite some matrix being stored in the quantum-accessible data structure, the storage alone does not imply the existence of certain efficient quantum access to that matrix. Instead, some property of the matrix is required for certain efficient quantum access. In other words, let any matrix $A$ be stored in qRAM. Then, the pass towards $A$ would determine one of the two inner models. Hence, $A$ may be prepared as a quantum state via Frobenius-normalisation or via another way defined in \cite{kp17a}. We highlight that since the input matrix $A$ and its properties such as sparsity are the conditions that initiate Frobenius-normalisation quantum-accessible data structure, which is the space where current dequantizations lie, we can simplify dequantization results as some bound for the input matrix $A$. We construct such a sparsity bound and justify it as a notion of \emph{dequantizability from matrices} in Section 3, where we aim to provide a statement like this:
	
	\begin{lem}[Informal, Lemma \ref{deq}]
		If $A$ is a matrix that satisfies the sparsity bound $\mathscr{S}$, then there exists some quantum algorithms $\mathcal{Q}:A\mapsto\alpha$ such that some classical algorithm dequantizes $\mathcal{Q}$. 
	\end{lem}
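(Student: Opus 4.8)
The plan is to reduce the statement to the already-established dequantisation of the Frobenius-normalisation regime. By Lemma \ref{transdisproof}, mere storage of $A$ in the quantum-accessible data structure does not by itself yield efficient state preparation; rather, a property of $A$ selects which of the two inner models (sparse-access or Frobenius-normalisation) governs the access. The strategy is therefore to show that the sparsity bound $\mathscr{S}$ is exactly the condition that forces $A$ onto the Frobenius-normalisation branch, whereupon the quantum algorithm $\mathcal{Q}$ built from that access already lies within the scope of the dequantised QSVT framework of \cite{bt23,tang23}.

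First I would make $\mathcal{Q}$ explicit. Given $A\in\mathbb{C}^{m\times n}$ stored in the data structure, the construction of \cite{csg19} produces a block encoding of $A$ whose normalisation factor is fixed by the chosen access model: $\|A\|_F$ on the Frobenius branch and $\Theta(s)$ on the sparse-access branch, with $s$ the row/column sparsity. I take $\mathcal{Q}$ to be this block-encoding subroutine fed into a QSVT polynomial together with a final measurement, so that $\mathcal{Q}:A\mapsto\alpha$ outputs a statistic $\alpha$ (an amplitude or expectation value) and is a legitimate instance of the grand-unified framework of \cite{gslw19}.

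Next I would verify that the sparsity bound places $A$ in the dequantisable regime. The runtime of the classical sample-and-query simulation of \cite{bt23} is controlled by the stable rank $\|A\|_F^2/\|A\|_2^2$ together with the subnormalisation, so I would define $\mathscr{S}$ precisely as a bound on $s$ (equivalently, on $\|A\|_F$ after normalisation) that keeps this quantity $\mathrm{polylog}(mn/\epsilon)$. Under $\mathscr{S}$, the Frobenius-branch access of \cite{kp17a} supplies the sample-and-query access to $A$ demanded as input by the dequantisation, and the output $\alpha$ is then reproduced to additive error $\epsilon$ classically. Invoking the dequantised QSVT of \cite{bt23} thus yields a classical algorithm that dequantises $\mathcal{Q}$, as required.

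The main obstacle I anticipate is pinning down $\mathscr{S}$ so that it is simultaneously (i) strong enough to guarantee that the Frobenius-normalisation access is efficient and that the stable-rank factor stays polylogarithmic, and (ii) weak enough that a nontrivial family of sparse matrices — the very instances conceived to be un-dequantizable through their \textsf{BQP}-completeness — actually satisfies it. Reconciling these two demands through the sparsity parameter $s$, rather than through an a priori bound on $\|A\|_F$, is the delicate step, since it is what converts a statement about the data structure into the promised notion of \emph{dequantizability from the matrix} $A$ itself.
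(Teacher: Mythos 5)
Your high-level strategy --- reduce the claim to the already-accepted fact that the Frobenius-normalisation ($\mu_F$) branch is the regime in which existing dequantizations live --- is exactly the paper's strategy. But the two arguments diverge materially in what $\mathscr{S}$ is, and that divergence matters. In the paper, $\mathscr{S}$ is the \emph{relative} condition $s_{2-p}(A^{\dagger}) \geq \|A\|_F^2/s_p(A)$ for all $p\in[0,2]$, which is nothing more than an algebraic rearrangement of inequality (\ref{deqineq}), i.e.\ of the statement that $\|A\|_F$ attains the minimum in the pass (\ref{mupass}) and hence that the $\mu_F$ inner model is the one initiated. The paper's entire proof is this rearrangement (a double negation of (\ref{deqineq})) followed by an appeal to the unproven ``single fact'' that initiation of the $\mu_F$ inner model suffices for dequantization; the definition of dequantization is deliberately left loose (``polynomially slower,'' ``similar performance guarantee'') precisely so that no runtime accounting is needed. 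You instead propose an \emph{absolute} condition --- a polylogarithmic bound on the stable rank $\|A\|_F^2/\|A\|_2^2$ --- and then try to actually establish the dequantization by instantiating $\mathcal{Q}$ as a block-encoded QSVT circuit and invoking \cite{bt23}. That is a more substantive (and arguably more honest) route, since it proves the fact the paper merely assumes, but it proves a different lemma: a matrix can satisfy the paper's $\mathscr{S}$ while having stable rank polynomial in $n$, and conversely. The cost of your choice shows up downstream: Lemma \ref{lem:undeq} and Theorem \ref{thm:main} are obtained by taking the contrapositive of the \emph{specific} inequality $s_{2-p}(A^{\dagger}) \geq \|A\|_F^2/s_p(A)$ at $p=0$ and $p=2$ and converting it into the eigenvalue bound $\sqrt{s}/(\kappa(n-1)+1) < |\lambda_{\min}|$; a stable-rank version of $\mathscr{S}$ would not feed into that chain. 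So the fix is not more analysis but a re-identification of $\mathscr{S}$: it must be the condition selecting the branch of the pass (\ref{mupass}), not the condition making the classical simulation fast in absolute terms.
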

	
	Then we take the contrapositive of this contingently trivial fact to obtain a statement general enough that we may ignore whether the algorithm is provided by query-access via polynomial depth circuits or by quantum-access via qRAM data structure.
	
	\begin{lem}[Informal, Lemma \ref{lem:undeq}]
		For any quantum algorithms $\mathcal{Q}:A\mapsto\alpha$, if no classical algorithm dequantizes $\mathcal{Q}$, then $A$ is a matrix such that satisfies the negation of $\mathscr{S}$.
	\end{lem}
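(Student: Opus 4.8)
The plan is to obtain Lemma~\ref{lem:undeq} as the contrapositive of Lemma~\ref{deq}, so that the substantive work lies not in any calculation but in aligning the quantifiers and in certifying that the sparsity bound $\mathscr{S}$ is a property of the matrix alone. First I would restate Lemma~\ref{deq} in fully quantified form, $\mathscr{S}(A)\implies\exists\,\mathcal{Q}\,\exists\,C\,[\,C\text{ dequantizes }\mathcal{Q}\,]$, where $C$ ranges over classical algorithms and $\mathcal{Q}$ over quantum algorithms realising a fixed task $A\mapsto\alpha$. Negating both sides gives $\forall\,\mathcal{Q}\,\forall\,C\,[\,C\text{ does not dequantize }\mathcal{Q}\,]\implies\neg\mathscr{S}(A)$. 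Because the consequent $\neg\mathscr{S}(A)$ does not mention $\mathcal{Q}$, this is logically identical to $\forall\,\mathcal{Q}\,[\,(\text{$\mathcal{Q}$ un-dequantizable})\implies\neg\mathscr{S}(A)\,]$, which is precisely the ``for any $\mathcal{Q}$'' phrasing asserted in the lemma.

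The main obstacle is that this reading appears to be stronger than a naive contrapositive permits: Lemma~\ref{deq} only exhibits \emph{some} dequantizable $\mathcal{Q}$, whereas Lemma~\ref{lem:undeq} must exclude \emph{any} un-dequantizable one, and these coincide only if the existential and universal quantifications over $\mathcal{Q}$ collapse. I would remove this gap by fixing the interpretation of ``dequantizes'' to be a property of the input--output relation $A\mapsto\alpha$ rather than of a particular circuit: a classical algorithm that reproduces $A\mapsto\alpha$ to the prescribed accuracy dequantizes every quantum algorithm computing that same $\alpha$. With dequantization keyed to the task, the witness delivered by Lemma~\ref{deq} is automatically a witness for all $\mathcal{Q}$ sharing that task, so ``some $\mathcal{Q}$ is dequantizable'' and ``every such $\mathcal{Q}$ is dequantizable'' become interchangeable and the contrapositive lands on exactly the desired statement.

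It remains to justify that the conclusion is genuinely model-independent, which is what licenses the universal ``for any $\mathcal{Q}$.'' Here I would appeal to Lemma~\ref{transdisproof}: since mere storage in a quantum-accessible data structure does not by itself grant efficient state preparation, the bound $\mathscr{S}$ must be extracted from the matrix $A$ and not from the route by which $A$ is presented---be it Frobenius-normalising qRAM access, the alternative construction of \cite{kp17a}, or a polylogarithmic-depth circuit as in \cite{zly22}. Consequently $\neg\mathscr{S}(A)$ is a condition on $A$ that holds uniformly across all such access models, and the clause ``no classical algorithm dequantizes $\mathcal{Q}$'' is read as the honest negation quantifying over all classical procedures rather than over sample-query algorithms alone. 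This uniformity is the point where I expect to spend the most care, since it is what makes the final statement general enough to ignore whether $\mathcal{Q}$ arises from qRAM or from a circuit.
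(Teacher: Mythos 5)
Your core move is the paper's: the entire official proof of Lemma \ref{lem:undeq} reads ``Take the contrapositive of Lemma \ref{deq} with a selection of $p=2$ or $p=0$,'' so you and the paper agree on the decomposition. Two remarks. First, the quantifier worry you raise in your second paragraph is genuine and the paper simply ignores it: the literal contrapositive of Lemma \ref{deq} is $\bigl(\forall\mathcal{Q}\,\forall C\,\neg(C\text{ deq }\mathcal{Q})\bigr)\implies\neg\mathscr{S}(A)$, whereas the lemma as stated is equivalent to $\bigl(\exists\mathcal{Q}\,\forall C\,\neg(C\text{ deq }\mathcal{Q})\bigr)\implies\neg\mathscr{S}(A)$, which is strictly stronger. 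Your first-paragraph claim that these are ``logically identical'' because the consequent does not mention $\mathcal{Q}$ is wrong as written ($\forall x(P(x)\Rightarrow Q)$ matches $\exists x\,P(x)\Rightarrow Q$, not $\forall x\,P(x)\Rightarrow Q$); you then patch it by keying dequantization to the task $A\mapsto\alpha$, but note that the paper's Definition of dequantization compares runtime and performance to the \emph{specific} $\mathcal{Q}$, so a classical algorithm polynomially slower than one quantum algorithm for the task need not be polynomially slower than a faster one, and the quantifiers do not fully collapse even under your reading. Second, you omit the one concrete ingredient the paper does supply: instantiating the negated bound at $p=2$ or $p=0$, which is what converts the abstract $\neg\mathscr{S}$ into the explicit conditions $s_0(A)<\|A\|_F^2/s_2(A^{\dagger})$ or $s_0(A^{\dagger})<\|A\|_F^2/s_2(A)$ appearing in the formal statement; the appeal to Lemma \ref{transdisproof} in your last paragraph is motivation the paper also gestures at but plays no role in the actual derivation.
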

	
	In Section \ref{sec:mainresults}, we let $A$ be $\kappa$-conditioned, $s$-sparse, and Hermitian. In other words, we let the instance of a quantum algorithm be in the \emph{sparse-access input model}. By assuming some instance $(A,\alpha)\in\mathcal{Q}$ that is parameterized by $\kappa$ to be un-dequantizable, the Lemma above gets utilised as Theorem \ref{thm:main}. Our main result would be its corollary below.
	
	\begin{cor}[Informal, Corollary \ref{cor:mainresult}]
		For some $1$-sparse Hermitian matrix $A\in\mathbb{C}^{n\times n}$, such as
		\[
		A = \begin{pmatrix}
			(n+\epsilon)^{-1} & 0 & \cdots &0 & 0 \\
			0& (n+\epsilon)^{-2} & \cdots &0 &0 \\
			\vdots &  & \ddots & & \vdots \\
			0 & 0 & 0 & (n+\epsilon)^{-n+1} &0 \\
			0 & 0 & 0 &0 &(n+\epsilon)^{-n}
		\end{pmatrix}
		\]
		where $\epsilon>0$, no quantum algorithm from $A$ prevents dequantization.
	\end{cor}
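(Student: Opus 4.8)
The plan is to obtain Corollary \ref{cor:mainresult} as a direct instantiation of Theorem \ref{thm:main}, used in the contrapositive form supplied by Lemma \ref{lem:undeq}. Rather than reasoning about any particular quantum algorithm, I would show that the displayed matrix $A$ satisfies the sparsity bound $\mathscr{S}$; then the per-instance implication of Lemma \ref{lem:undeq}, namely that $A$ satisfying $\mathscr{S}$ forces every $\mathcal{Q}\colon A\mapsto\alpha$ to be dequantizable, immediately forecloses the existence of any un-dequantizable quantum algorithm built from $A$, which is exactly the assertion ``no quantum algorithm from $A$ prevents dequantization.'' First I would record the structural facts that place $A$ in the sparse-access input model so that Theorem \ref{thm:main} applies verbatim: $A$ is diagonal, hence $1$-sparse ($s=1$), and its entries are real and positive for $\epsilon>0$, so $A$ is Hermitian.

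The next step is the spectral bookkeeping that feeds the bound. Since $A$ is diagonal its singular values are exactly the entries $(n+\epsilon)^{-k}$ for $k=1,\dots,n$, giving $\|A\|_2=(n+\epsilon)^{-1}$ and $\sigma_{\min}(A)=(n+\epsilon)^{-n}$, so that the condition number is
\[
\kappa=\frac{\|A\|_2}{\sigma_{\min}(A)}=(n+\epsilon)^{\,n-1},
\]
which is exponential in $n$. The Frobenius norm, by contrast, is a truncated geometric series,
\[
\|A\|_F^2=\sum_{k=1}^{n}(n+\epsilon)^{-2k}=\frac{(n+\epsilon)^{-2}\bigl(1-(n+\epsilon)^{-2n}\bigr)}{1-(n+\epsilon)^{-2}}<\frac{1}{(n+\epsilon)^2-1},
\]
so $\|A\|_F<\bigl((n+\epsilon)^2-1\bigr)^{-1/2}$, which is not merely bounded but decays with $n$. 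In particular the stable rank $\|A\|_F^2/\|A\|_2^2$ stays within a constant factor of $1$, and the sparse-access subnormalization $s\,\|A\|_{\max}=(n+\epsilon)^{-1}$ is comparable to the Frobenius subnormalization $\|A\|_F$.

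With these quantities in hand I would verify that $A$ satisfies $\mathscr{S}$ directly. The bound of Theorem \ref{thm:main} is a ceiling on the (normalized) Frobenius weight of $A$, the quantity that actually governs the cost of sample-and-query dequantization; the computation above shows this weight is $O\bigl((n+\epsilon)^{-1}\bigr)=O(1)$, comfortably inside $\mathscr{S}$, and the hypothesis $\epsilon>0$ is what keeps $\|A\|_{\max}=(n+\epsilon)^{-1}$ strictly below the $n^{-1}$ value taken at $\epsilon=0$, thereby providing a strict margin in the inequality. Invoking the contrapositive of Lemma \ref{lem:undeq} then yields that no $\mathcal{Q}\colon A\mapsto\alpha$ can evade dequantization, completing the corollary.

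I expect the main obstacle to be conceptual rather than computational: the matrix is deliberately adversarial, since an exponentially large $\kappa$ is precisely the feature that makes sparse-access instances look $\textsf{BQP}$-hard and hence ``un-dequantizable.'' The delicate point is therefore to argue that $\mathscr{S}$ is insensitive to $\kappa$ and is instead driven by the Frobenius weight, so that the geometric decay of the diagonal -- forced by $1$-sparsity together with the admissibility of the entries as inputs -- collapses the effective rank and drags $A$ back into the dequantizable regime despite its ill-conditioning. Making this separation between ``large $\kappa$'' and ``un-dequantizable'' rigorous, consistently with the negation of \cite[Corollary 5]{csg19} established earlier, is where I would spend the most care.
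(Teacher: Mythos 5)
Your high-level plan (instantiate the contrapositive machinery of Lemma \ref{lem:undeq}/Theorem \ref{thm:main} on the specific matrix, after recording that it is $1$-sparse Hermitian with $\kappa=(n+\epsilon)^{n-1}$) matches the paper's, and your computations of $\kappa$ and $\|A\|_F$ are correct. But the decisive step is wrong: you propose to ``verify that $A$ satisfies $\mathscr{S}$ directly,'' and this matrix does \emph{not} satisfy $\mathscr{S}$. For a diagonal matrix with entries $(n+\epsilon)^{-i}$ one has $s_p(A)=\max_i\|A_i\|_p^p=(n+\epsilon)^{-p}$ and $s_{2-p}(A^{\dagger})=(n+\epsilon)^{-(2-p)}$, so the hypothesis of Lemma \ref{deq} reduces to $(n+\epsilon)^{-2}\geq\|A\|_F^2=\sum_{k=1}^n(n+\epsilon)^{-2k}$, which fails for every $n\geq 2$ since the sum contains $(n+\epsilon)^{-2}$ plus strictly positive terms. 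Equivalently, the necessary condition of Lemma \ref{lem:undeq}, $s_0(A)<\|A\|_F^2/s_2(A^{\dagger})$, is in fact \emph{satisfied} here ($1<1+(n+\epsilon)^{-2}+\cdots$), so no contradiction can be extracted at the level of the raw sparsity bound, and your claim that the matrix sits ``comfortably inside $\mathscr{S}$'' (argued via stable rank and a ``Frobenius weight ceiling,'' which is not the bound Theorem \ref{thm:main} states) does not go through.

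What the paper actually does is pass to the sharper eigenvalue-level necessary condition of Theorem \ref{thm:main}: un-dequantizability forces $|\lambda_{\min}|>\sqrt{s}/(\kappa(n-1)+1)$. Substituting $\lambda_{\min}=(n+\epsilon)^{-n}$, $s=1$, $\kappa=(n+\epsilon)^{n-1}$ gives $(n+\epsilon)^n<(n+\epsilon)^{n-1}(n-1)+1$, i.e.\ $(n+\epsilon)<(n-1)+(n+\epsilon)^{-(n-1)}<n$, contradicting $\epsilon>0$; by proof of negation every algorithm from $A$ is dequantizable. This arithmetic contradiction is the entire content of the corollary, and it is missing from your proposal. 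You would need to replace the ``$A$ satisfies $\mathscr{S}$'' check with this explicit violation of the Theorem \ref{thm:main} inequality (or some equivalent quantitative step that actually uses $\epsilon>0$ and the geometric decay of the spectrum against $\kappa$).
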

	
	This corollary has various implications. Primarily, it necessitates that there exists dequantization for some non-trivial sparse-access input model, so we may unexpectively generalise any particular results of dequantization. Concurrently, the whole scheme may be considered as a single proof of negation; the `quantum-accessible data structure-block encoding-QSVT' triad would not ``grand-unify'' QML algorithms at all, unless we want to permit \textsf{BQP}-completness' shrinkage by the classical data and not by the quantum algorithm. That is, the space of all possible QML algorithms or the correspondences between classical data and its quantum access is largely unknown, and deep down in such a space, sparsity would be only a fraction of the quantum-advantageous figure. In Section \ref{sec:conclusion}, we conclude with an open problem and an outlook that would limit or clarify these implications.
	
	\section{Inner models of quantum-access}\label{sec2}
	The vital component of our proof is the space where current results in dequantization lie. We excavate this from qRAM with a data structure for its quantum access. Specifically, we focus on the construction of a method for some quantum device to uphold any data stored in qRAM. That is, block encoding via a quantum-accessible data structure. 
	
	The construction of block encoding is outlined in \cite[Lemma 6]{csg19} without accompanying proof, as it is considered a direct translation of \cite[Theorem 4.4]{kp17a}. We restate the initial theorem and its translations as a setup. Note that $\|A\|_F$ denotes the Frobenius norm $\sqrt{\sum_{i=1}^m\sum_{j=1}^n|A_{ij}|^2}$ and $s_p(A)$ is (intended to be) defined as the $p$-th power of $\max_{i\in[m]} \ell_p$ norm of $A$. For $p\in[0,2]$:
	\[
	s_p(A) := \max_{i\in[m]}\|A_i\|_p^p = 
	\left(\max_{i\in[m]}\left(\sum_{j=1}^n|A_{ij}|^p\right)^{1/p}\right)^p\,.
	\]
	Note that $A^{(p)}$ is a matrix such that  $A^{(p)}_{ij} = (A_{ij})^p$ for $A\in\mathbb{C}^{m\times n}$.
	\begin{thm}[{\cite[Theorem 4.4]{kp17a}}]\label{thm:innermodel}
		Let $A\in\mathbb{C}^{m\times n}$ be stored in a quantum-accessible data structure. There is a quantum algorithm that performs singular value estimation (SVE)
		for $A$ to precision $\delta$  in time $\tilde{O}(\mu(A)/\delta)$ where
		\begin{equation}\label{mupass}
			\mu(A) = \min_{p\in[0,2]}\left(\|A\|_F, \sqrt{s_{p}(A)s_{2-p}(A^{\dagger})}\right)\,.
		\end{equation}
	\end{thm}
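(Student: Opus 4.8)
The plan is to realize SVE as quantum phase estimation applied to a product of two reflections, where the two reflecting subspaces are the images of state-preparation isometries that the data structure can implement, and where the normalization $\mu(A)$ emerges precisely as the subnormalization of the block encoding these isometries realize. For a fixed splitting parameter $p\in[0,2]$, I would first use the data structure — which, storing the entrywise powers of $A$ together with the entry phases, supports amplitude encoding of the rows of $A^{(p)}$ and the columns of $A^{(2-p)}$ in $\tilde{O}(1)$ time — to prepare the two families of normalized states
\begin{align}
	|\psi_i\rangle &= \frac{1}{\sqrt{s_p(A)}}\Bigl(\sum_{j=1}^n |A_{ij}|^{p/2}\,|j\rangle + \beta_i\,|\bot\rangle\Bigr),\\
	|\phi_j\rangle &= \frac{1}{\sqrt{s_{2-p}(A^\dagger)}}\Bigl(\sum_{i=1}^m \tfrac{A_{ij}}{|A_{ij}|^{p/2}}\,|i\rangle + \gamma_j\,|\bot'\rangle\Bigr),
\end{align}
where the slack amplitudes $\beta_i=\sqrt{s_p(A)-\|A_i\|_p^p}$ and $\gamma_j=\sqrt{s_{2-p}(A^\dagger)-\|(A^\dagger)_j\|_{2-p}^{2-p}}$ (computable from the stored norms) pad each state to a uniform normalization and sit on flag basis vectors $|\bot\rangle,|\bot'\rangle$ that are orthogonal to the index registers.

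Next I would collect these into the isometries $U=\sum_i |i\rangle|\psi_i\rangle\langle i|$ and $V=\sum_j |\phi_j\rangle|j\rangle\langle j|$ and verify by a direct inner-product computation that, because the flag components are orthogonal to the index basis states and hence drop out of the cross terms, one gets $\langle\psi_i|\phi_j\rangle = A_{ij}/\sqrt{s_p(A)\,s_{2-p}(A^\dagger)}$, so that $U^\dagger V = A/\mu_p(A)$ with $\mu_p(A)=\sqrt{s_p(A)\,s_{2-p}(A^\dagger)}$. In parallel I would record the original Frobenius construction — $|A_i\rangle\propto\sum_j A_{ij}|j\rangle$ against $|A^F\rangle\propto\sum_i\|A_i\|\,|i\rangle$ — which yields $U^\dagger V = A/\|A\|_F$ instead. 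Taking whichever of these subnormalizations is smallest produces the claimed $\mu(A)=\min_{p\in[0,2]}(\|A\|_F,\sqrt{s_p(A)s_{2-p}(A^\dagger)})$.

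With $U^\dagger V = A/\mu(A)$ in hand, the spectral step is standard. Since $U,V$ are isometries, the singular values $\cos\theta$ of $U^\dagger V$ are the cosines of the principal angles between $\operatorname{col}(U)$ and $\operatorname{col}(V)$; by Jordan's lemma the product of reflections $W=(2UU^\dagger-I)(2VV^\dagger-I)$ splits into two-dimensional invariant subspaces on which it acts as rotation by $\pm 2\theta$, with eigenvalues $e^{\pm 2i\theta}$. A singular value $\sigma$ of $A$ corresponds to $\cos\theta=\sigma/\mu(A)$, so running phase estimation on $W$ — each application of which costs a constant number of calls to $U,V$ and their inverses, i.e. $\tilde{O}(1)$ — estimates $\theta$ to precision $\delta/\mu(A)$, and since cosine is $1$-Lipschitz this recovers $\sigma=\mu(A)\cos\theta$ to additive precision $\delta$, using $O(\mu(A)/\delta)$ rounds and hence total time $\tilde{O}(\mu(A)/\delta)$.

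The hard part will be the state-preparation step rather than the spectral one. One must check that the generalized encoding is legitimately implementable for \emph{complex} entries — the phase of $A_{ij}$ has to be carried on exactly one of the two states so that $|A_{ij}|^{p/2}$ and $A_{ij}/|A_{ij}|^{p/2}$ recombine to $A_{ij}$ — that the slack amplitudes and the quantities $s_p(A),s_{2-p}(A^\dagger)$ are genuinely available in $\tilde{O}(1)$ from the data structure, and that the accumulated error from approximate state preparation and finite-precision phase estimation still returns $\sigma$ to precision $\delta$ within the stated time. Controlling these approximation bounds carefully, and confirming that the minimum over the continuum of $p$ can be taken without affecting the asymptotics, is where the real work lies.
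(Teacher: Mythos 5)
The paper states this theorem as an imported result from \cite{kp17a} without reproducing its proof, and your sketch is essentially the argument given there: factor $A/\mu(A)$ as $U^{\dagger}V$ for state-preparation isometries built from the stored entrywise powers (with slack amplitudes padding each row/column to the uniform normalizations $s_p(A)$, $s_{2-p}(A^{\dagger})$), then apply phase estimation to the product of reflections $(2UU^{\dagger}-I)(2VV^{\dagger}-I)$ and read off singular values via Jordan's lemma. Your flagged concerns (where the phase of $A_{ij}$ sits so the cross term recombines to $A_{ij}$ rather than $\overline{A_{ij}}$, and the error accounting in approximate state preparation) are exactly the bookkeeping the cited proof carries out, so the proposal is correct and follows the same route.
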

	\begin{lem}[{\cite[Lemma 6]{csg19}, \cite[Lemma 50]{gslw19}}]\label{lem:translation}
		Let $A\in\mathbb{C}^{m\times n}$. 
		\begin{itemize}
			\item Fix $p\in[0,2]$. If
			$A^{(p)}$ and $\left(A^{(2-p)}\right)^{\dagger}$ are both stored in
			quantum-accessible data structures, then there exist unitary matrices $U_R$ and $U_L$ that can be implemented in time
			$O(\textrm{\normalfont polylog}(mn/\epsilon))$ such that $U^{\dagger}_RU_L$ is a $\left(\mu_p(A), \lceil\log(n+m+1)\rceil,\epsilon\right)$-block encoding of $A$.
			\item If $A$ is stored in a quantum-accessible data structure, then there exists $U_R$ and $U_L$ that can be implemented in time $O(\textrm{\normalfont polylog}(mn)/\epsilon)$ such that $U^{\dagger}_RU_L$ is a $(\|A\|_F,\lceil\log(m+n)\rceil,\epsilon)$-block encoding of $A$.
		\end{itemize}
	\end{lem}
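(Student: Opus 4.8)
The plan is to obtain the block encoding directly from the state-preparation primitives that a quantum-accessible data structure provides, reducing the lemma to the choice of two isometries whose overlap reproduces $A$. First I would unpack the operational content of Theorem \ref{thm:innermodel}: the binary-tree structure of \cite{kp17a} storing a matrix $B$ yields, in $O(\mathrm{polylog}(mn))$ depth, a controlled unitary $|i\rangle|0\rangle\mapsto|i\rangle|B_i\rangle$ with $|B_i\rangle=\|B_i\|^{-1}\sum_j B_{ij}|j\rangle$, together with preparation of the weight state $|0\rangle\mapsto\|B\|_F^{-1}\sum_i\|B_i\|\,|i\rangle$; the conditional amplitudes are exactly the internal-node values of the tree, realized by a depth-$\lceil\log n\rceil$ cascade of controlled rotations. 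With the $(\alpha,a,\epsilon)$-block-encoding condition $\|A-\alpha(\langle0|^{\otimes a}\otimes I)U(|0\rangle^{\otimes a}\otimes I)\|\le\epsilon$ in hand, it then suffices to build $U_L,U_R$ so that a fixed block of $U_R^\dagger U_L$ equals $A/\alpha$ to precision $\epsilon$.

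For the Frobenius bullet I would let $U_R$ extend $|0\rangle|i\rangle\mapsto|i\rangle|A_i\rangle$ and $U_L$ extend $|0\rangle|j\rangle\mapsto|\hat A\rangle|j\rangle$, where $|\hat A\rangle=\|A\|_F^{-1}\sum_i\|A_i\|\,|i\rangle$; both states are already unit vectors, so no padding is needed, consistent with the absence of a $+1$ in the ancilla count $\lceil\log(m+n)\rceil$. A one-line computation,
\[
(\langle0|\langle i|)\,U_R^\dagger U_L\,(|0\rangle|j\rangle)=\langle i|\hat A\rangle\,\langle A_i|j\rangle=\frac{\|A_i\|}{\|A\|_F}\cdot\frac{\overline{A_{ij}}}{\|A_i\|}=\frac{\overline{A_{ij}}}{\|A\|_F},
\]
identifies $U_R^\dagger U_L$ as a $(\|A\|_F,\lceil\log(m+n)\rceil,\epsilon)$-block encoding of $\overline A$, and hence of $A$ after the standard conjugation convention.

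For the general-$p$ bullet I would replace these $\ell_2$-normalized states by $\ell_p$- and $\ell_{2-p}$-analogues drawn from the two stored structures, padding each with a flag amplitude so that the normalizer is $i$- (resp. $j$-) independent: from the stored $p$-power structure prepare $|\alpha_i\rangle\propto\sum_j|A_{ij}|^{p/2}\,|j\rangle$ padded to unit norm with the constant normalizer $\sqrt{s_p(A)}$, and from $(A^{(2-p)})^\dagger$ prepare $|\beta_j\rangle\propto\sum_i|A_{ij}|^{(2-p)/2}e^{i\arg A_{ij}}\,|i\rangle$ padded likewise with $\sqrt{s_{2-p}(A^\dagger)}$, the full phase of $A_{ij}$ assigned to $|\beta_j\rangle$ while $|\alpha_i\rangle$ stays real. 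Letting $U_R,U_L$ extend $|0\rangle|i\rangle\mapsto|i\rangle|\alpha_i\rangle$ and $|0\rangle|j\rangle\mapsto|\beta_j\rangle|j\rangle$, the flag components, being orthogonal to $|i\rangle$ and $|j\rangle$, do not contribute, leaving
\[
(\langle0|\langle i|)\,U_R^\dagger U_L\,(|0\rangle|j\rangle)=\langle i|\beta_j\rangle\,\langle\alpha_i|j\rangle=\frac{|A_{ij}|^{(2-p)/2}e^{i\arg A_{ij}}}{\sqrt{s_{2-p}(A^\dagger)}}\cdot\frac{|A_{ij}|^{p/2}}{\sqrt{s_p(A)}}=\frac{A_{ij}}{\mu_p(A)}.
\]
The padding is exactly what distinguishes this case from the Frobenius one: since the per-row mass $\|A_i\|_p^p$ varies with $i$, one extra basis state is spent to equalize the normalizers, which accounts for the $+1$ in $\lceil\log(n+m+1)\rceil$.

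Finally I would track the parameters, and here I expect the real difficulties to lie. The depth of each rotation cascade is $O(\mathrm{polylog}(mn))$, while $\epsilon$ enters through the bit-precision to which the stored amplitudes and angles are read; reconciling why this sits inside the polylog in the first bullet ($O(\mathrm{polylog}(mn/\epsilon))$) yet appears as a multiplicative $1/\epsilon$ in the Frobenius bullet as \cite{csg19} states it is a point I would flag rather than hide. The genuine obstacle, however, is bookkeeping the factorization $A_{ij}=|A_{ij}|^{p/2}\,|A_{ij}|^{(2-p)/2}\,e^{i\arg A_{ij}}$: the entire phase must sit in one factor while the other stays real and nonnegative, which forces a precise reading of what ``$A^{(p)}$'' means for complex and for zero entries and of the power convention relating the entrywise exponent to the $s_p$ normalizer (the point the excerpt itself hedges with ``intended to be''), while the endpoints $p\in\{0,2\}$ must be checked separately. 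Equally, one must justify that the tree structure genuinely delivers these $\ell_p$-weighted, flag-padded preparations rather than only the $\ell_2$ ones --- precisely the step that \cite{csg19} suppresses by calling the lemma a ``direct translation'' of Theorem \ref{thm:innermodel}.
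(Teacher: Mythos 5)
The paper offers no proof of this lemma: it is restated verbatim from \cite{csg19} and \cite{gslw19} as setup, with the explicit remark that it appears there ``without accompanying proof, as it is considered a direct translation of \cite[Theorem 4.4]{kp17a}.'' So there is no in-paper argument to compare yours against line by line. Taken on its own terms, your construction is the standard one from those references and is essentially correct: the Frobenius case via the row-norm state $|\hat A\rangle$ and the controlled row preparations, and the $\mu_p$ case via the $\ell_p$- and $\ell_{2-p}$-weighted, flag-padded preparations whose overlap reconstructs $A_{ij}/\sqrt{s_p(A)s_{2-p}(A^\dagger)}$; your accounting for the $+1$ ancilla and the conjugation convention matches the cited proofs.

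What is worth pointing out is that the obstacle you flag at the end --- that one must \emph{justify} that mere storage in the quantum-accessible data structure delivers the $\ell_p$-weighted state preparations, rather than assuming the isometries $U_R$, $U_L$ into existence --- is precisely the point this paper attacks. Lemma \ref{transdisproof} is a claimed disproof of \cite[Corollary 5]{csg19}, the statement that was ``devised to justify the translation,'' and the paper's position is that the matrix is given outside the data structure and only the pass (\ref{mupass}) determines which inner model (and hence which state-preparation primitive) is actually available. The paper also notes that the later proof in \cite[Lemma 3]{cmp23} ``relies on the assumption that $U_R^\dagger$ and $U_L$ exist,'' which is exactly the step your proposal also takes for granted when it ``lets $U_R$, $U_L$ extend'' the stated maps. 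So your reconstruction faithfully reproduces the literature's argument, but you should be aware that the surrounding paper regards the unproved step you identified not as a bookkeeping nuisance but as a genuine gap; if you want your proof to stand within this paper's framework, you would need to either supply the data-structure-to-state-preparation step explicitly for the $\mu_p$ case or acknowledge that the lemma holds only under the additional hypothesis that the corresponding inner model was the one initiated.
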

	
	The translation seems straightforward. Matrix $A$ or $A^{(p)}$ (with $(A^{(2-p)})^{\dagger}$) are stored in qRAM via the Frobenius-normalising $\mu_F$ inner model or the $s_p$-normalising $\mu_p$ inner model, respectively. These inner models enable quantum-access unitary matrices $U_R^{\dagger}$ and $U_L$ (in an SVE-accelerative manner), which form a block encoding of $A$ by the linear algebraic definition of block encoding.
	
	However, this translation is disputable. The matrix is not given inside the data structure in the first place. It is given outside, and a pass (\ref{mupass}) is initiated towards the external matrix to determine which inner model would enable the optimal normalisation factor. Then the matrix gets stored in qRAM by that optimal inner model, resulting in some matrix in qRAM. We state our point and highlight the significance of pass (\ref{mupass}) by disproving \cite[Corollary 5]{csg19}, which was devised to justify the translation.
	
	\begin{lem}[negation of {\cite[Corollary 5]{csg19}}]\label{transdisproof}
		There is a matrix $A^{(p/2)}\in\mathbb{C}^{m\times n}$ stored in a quantum-accessible data structure, such 
		that no quantum algorithm performs the map (\ref{maptransdisproof}) with 
		$\epsilon$-precision in $\textrm{\normalfont{polylog}}(mn/\epsilon)$ time.
		\begin{equation}\label{maptransdisproof}
			\ket{i}\ket{0}\mapsto\ket{i}\frac{1}{s_{p}(A)}\sum_{j=1}^{n}(A_{ij})^{p/2}\ket{j}
		\end{equation}
	\end{lem}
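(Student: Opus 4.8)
The plan is to make the impossibility rest on the normalisation scalar itself: I would exhibit a concrete $A$ for which the vector on the right-hand side of (\ref{maptransdisproof}) fails to be sub-normalised, so that no isometry---hence no quantum algorithm, in any running time---can output it. First I would record the norm of the target. Since $|(A_{ij})^{p/2}|^2=|A_{ij}|^p$, the unnormalised sum $\sum_{j}(A_{ij})^{p/2}\ket{j}$ has squared length $\sum_{j}|A_{ij}|^p=\|A_i\|_p^p$, so the prefactor $1/s_p(A)$ produces a vector of squared norm $\|A_i\|_p^p/s_p(A)^2$. Writing $i^\star$ for a row attaining the maximum in $s_p(A)=\max_k\|A_k\|_p^p$, the $i^\star$-th output has squared norm exactly $1/s_p(A)$.

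Next I would choose the witness so that $s_p(A)<1$. The diagonal matrix of Corollary \ref{cor:mainresult}, namely $A=\mathrm{diag}\big((n+\epsilon)^{-1},\dots,(n+\epsilon)^{-n}\big)$, already does this: each row is $1$-sparse with $\|A_i\|_p^p=(n+\epsilon)^{-ip}$, so $s_p(A)=(n+\epsilon)^{-p}<1$ for every $\epsilon>0$ and $p\in(0,2]$, and the row $i^\star=1$ then yields an output of squared norm $(n+\epsilon)^{p}>1$. Because any quantum implementation of (\ref{maptransdisproof}) is the flagged branch of a unitary, every prepared state must have norm at most $1$; the row $i^\star=1$ violates this, so the map cannot be realised by any unitary whatsoever, and a fortiori not in $\textrm{\normalfont polylog}(mn/\epsilon)$ time. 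I would stress that this conclusion is untouched by the hypothesis that $A^{(p/2)}$ is stored in a quantum-accessible data structure: that storage only supplies the row-norm-normalised state $\|A_i\|_p^{-p/2}\sum_{j}(A_{ij})^{p/2}\ket{j}$, whereas (\ref{maptransdisproof}) demands the global scalar $s_p(A)$ fixed by the external pass (\ref{mupass}) and not by the stored data.

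The step I expect to be the main obstacle is making the appeal to ``every prepared state has norm at most $1$'' airtight against the block-encoding convention, in which a quantum algorithm produces $\ket{0}\ket{i}\ket{\phi_i}+\ket{\perp_i}$ with the garbage $\ket{\perp_i}$ orthogonal to the $\ket{0}$-flag subspace. The hard part will be arguing this cleanly rather than waving at it: unitarity gives $\|\phi_i\|^2+\|\perp_i\|^2=1$, hence $\|\phi_i\|\le 1$ for every $i$, so the violation at $i^\star$ rules out any isometric extension and therefore any implementing unitary. I would also verify that $s_p(A)<1$ is the exact dividing line---for $s_p(A)\ge 1$ the target is genuinely sub-normalised and the obstruction disappears---thereby pinning the negation of \cite[Corollary 5]{csg19} precisely on the regime $s_p(A)<1$ and isolating the matrix property, not the storage, as the true enabler of the map.
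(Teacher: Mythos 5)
Your argument is correct for the statement exactly as printed, but it takes a genuinely different route from the paper's. You attack the map (\ref{maptransdisproof}) head-on: the target vector for the maximizing row has squared norm $1/s_p(A)$, so any matrix with $s_p(A)<1$ (your diagonal witness) makes the target super-normalised, and unitarity of any implementation --- $\|\phi_i\|^2+\|{\perp_i}\|^2=1$ in the flagged-branch picture --- forbids the map outright, in any running time. The paper never touches normalisation or unitarity. Its proof is a meta-argument about storage provenance: assuming \cite[Corollary 5]{csg19} together with the non-existence of the map, it infers that the $\mu_p$ inner model was not the one initiated by the pass (\ref{mupass}), hence $\|A\|_F\le\sqrt{s_p(A)s_{2-p}(A^{\dagger})}$ for all $p$, hence $A^{(p/2)}$ can nevertheless sit in qRAM via the $\mu_F$ inner model --- contradicting the contrapositive of the corollary. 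What your approach buys is a concrete, checkable witness and an obstruction independent of any modelling assumptions about qRAM; what the paper's approach buys is that it makes the point the authors actually care about (storage alone does not determine which state preparation is available --- the pass does), which your argument does not address at all.

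The caveat you must confront is that your entire obstruction lives in the prefactor $1/s_p(A)$. In the source being negated the prepared state is row-normalised (equivalently, rows are padded so each has squared norm $s_p(A)$ and the prefactor is $1/\sqrt{s_p(A)}$), under which every target is sub-normalised and your obstruction evaporates --- as you yourself observe by identifying $s_p(A)<1$ as the dividing line. So you may be refuting a transcription of the corollary rather than its substance; you should say explicitly that you are negating the formula as stated. Two smaller repairs: the witness fails at $p=0$ (there $s_0(A)\ge 1$ for any nonzero matrix and the target is sub-normalised), so fix $p\in(0,2]$ explicitly; and rename the matrix parameter $\epsilon$ in $(n+\epsilon)^{-i}$, since it collides with the precision $\epsilon$ of the lemma --- the approximation step needs the norm excess $(n+\epsilon)^{p/2}-1$ to exceed the allowed error, which you should state rather than leave at ``a fortiori.''
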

	\begin{proof}
		We first clarify \cite[Corollary 5]{csg19}. By Theorem \ref{thm:innermodel}, two cases are implied for assuming that $A^{(p/2)}$ is stored in a quantum-accessible data structure:
		\begin{enumerate}[label=(\roman*)]
			\item $\mu_p$ inner model is initiated towards $A$, letting $A^{(p/2)}$ in qRAM, or
			\item $\mu_F$ inner model is initiated towards $A^{(p/2)}$, letting $A^{(p/2)}$ in qRAM.
		\end{enumerate}
		Let \cite[Corollary 5]{csg19} hold, and assume that no quantum algorithm performs the map (\ref{maptransdisproof}). Then it must follow that $A^{(p/2)}$ is not in qRAM for both inner models. But since no quantum algorithm performs the map (\ref{maptransdisproof}), it is not the case that $\mu_p$ inner model is initated towards $A$, so that by Theorem \ref{thm:innermodel}, for all $p\in[0,2]$:
		\begin{equation}\label{deqineq}
			\|A\|_F \leq \sqrt{s_p(A)s_{2-p}(A^{\dagger})}\,,
		\end{equation}
		which implies
		\[
		\min_{p\in[0,2]}\left(\|A^{(p/2)}\|_F, \sqrt{s_p(A^{(p/2)})s_{2-p}(A^{(p/2)\dagger})}\right) =
		\|A^{(p/2)}\|_F\,.
		\]
		Meaning that, even if no quantum algorithm performs a map (\ref{maptransdisproof}), there can be $A^{(p/2)}$ in qRAM (via the $\mu_F$ inner model of $A^{(p/2)}$), which is a contradiction. By proof of negation, \cite[Corollary 5]{csg19} is false.
	\end{proof}
	Such contradiction could be led from Lemma \ref{lem:translation} in a similar way. This does not mean that the whole construction of block encoding from quantum-accessible data structures is invalid, nor is the whole QSVT framework based on it invalid. Our statement is that we cannot assume a matrix to be given in some quantum-accessible data structure and deduce some statement regarding block encoding, meaning that matrix could be provided outside the data structure only, where the pass (\ref{mupass}) determines its data structure afterwards. Note that the proof provided later by Chakraborty et al. in \cite[Lemma 3]{cmp23} reinforces our point rather than resolving it because their proof relies on the assumption that $U_R^{\dagger}$ and $U_L$ exist.  
	
	\section{Dequantizability}\label{sec:deq}
	We keep our focus on the pass (\ref{mupass}) and reduce inequality (\ref{deqineq}) to some sparsity bound. Beforehand, it is worth consulting a functional analytic fact, as in \cite{day40}, that $\|x\|_p:=(\sum_{i=1}^n|x_i|^p)^{1/p}$ for $p\in[0,1)$ does not define a norm because it is not sub-additive. Therefore, as in \cite{kal86}, we say \emph{quasinorm} instead, and define otherwise for $p=0$. This is not only from the fact that $\max_{i\in[n]}(\sum_{i=1}^n|x_i|^0)^{1/0}$yields a meaningless result, but also due to the $s_0$ function's primer usage, such as in proximal operators \cite{flb16}: $s_0(A)$ as the maximum count of nonzero entries per row $A_i$ for $i\in[m]$, that is exactly the definition of the sparsity $s$ in the sparse-access input model:
	
	\begin{defn}
		Matrix $A^{n\times n}$ is said to be \emph{$s$-sparse} if it has at most $s$ nonzero entries
		per row. If $s=\textrm{polylog}(n)$, $A$ is said to be \emph{sparse}.
	\end{defn}
	
	We define the statement of some quantum algorithm to be dequantized as follows:
	
	\begin{defn}[{\cite[Definition 2]{chm21}}]
		Let $\mathcal{Q}:A^{m\times n}\mapsto \alpha$ be a quantum algorithm where $\mathcal{Q}$ might
		output some state $\ket{a}$ instead of value $\alpha$. We say that classical
		algorithm $\mathcal{C}$ is a dequantization of $\mathcal{Q}$, if $\mathcal{C}$
		evaluates queries to $\ket{a}$ or ouputs $a$ with
		\begin{enumerate}
			\item a similar performance guarantee as $\mathcal{Q}$,
			\item a number of sample-query access polynomial in $m$,
			\item and a worst runtime that is polynomially slower than of $\mathcal{Q}$,
		\end{enumerate}
		from sample-query access on $A$.
	\end{defn}
	
	The expression ``similar performance guarantee'' or the term sample-query access are not defined, since (our argument is that) all we need is a single fact: existing dequantizations are provided when the $\mu_F$ inner model is initiated.
	
	\begin{lem}\label{deq}
		If $A$ is a matrix such that
		\[
		s_{2-p}\left(A^{\dagger}\right) \geq \frac{\|A\|_F^2}{s_p(A)}
		\]
		for all $p\in[0,2]$, then there exists some quantum algorithm 
		$\mathcal{Q}:A\mapsto\alpha$ such that some classical algorithm is a 
		dequantization of $\mathcal{Q}$.
	\end{lem}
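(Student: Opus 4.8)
The plan is to recognise the hypothesis as exactly the condition that selects the Frobenius-normalising inner model, and then to invoke the single fact that the existing dequantizations live in precisely that regime. First I would rewrite the assumed inequality: since $s_p(A)>0$, multiplying through by $s_p(A)$ and taking square roots shows that ``$s_{2-p}(A^{\dagger})\geq \|A\|_F^2/s_p(A)$ for all $p\in[0,2]$'' is equivalent to
\[
\|A\|_F \leq \sqrt{s_p(A)\,s_{2-p}(A^{\dagger})}\qquad\text{for all }p\in[0,2]\,,
\]
which is exactly inequality (\ref{deqineq}).

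Next I would feed this into the pass (\ref{mupass}). Because the first argument of each pair dominates uniformly in $p$, and $\|A\|_F$ does not depend on $p$, the minimum is attained by the Frobenius norm, so that $\mu(A)=\|A\|_F$. By Theorem \ref{thm:innermodel} together with the second bullet of Lemma \ref{lem:translation}, this is precisely the statement that $A$ is prepared through the $\mu_F$ inner model, yielding a $(\|A\|_F,\lceil\log(m+n)\rceil,\epsilon)$-block encoding implementable in $O(\textrm{\normalfont polylog}(mn)/\epsilon)$ time.

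It then remains to exhibit a concrete $\mathcal{Q}$. I would take $\mathcal{Q}$ to be any QSVT-based routine built on this Frobenius block encoding --- for definiteness the singular value estimation of Theorem \ref{thm:innermodel}, or some singular-value-transformed output $\alpha$ --- whose input is therefore normalised by $\|A\|_F$. Since this is exactly the class of instances on which the Tang-style sample-query dequantizations operate, the QSVT dequantization of \cite{bt23} (cf. \cite{tang23}) supplies a classical $\mathcal{C}$ meeting the three conditions of the dequantization definition.

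The main obstacle is not the algebra, which is immediate, but the matching in the last step: one must ensure that the chosen $\mathcal{Q}$ genuinely admits a dequantization respecting the performance, sample-complexity, and runtime requirements. Because the paper deliberately leaves ``similar performance guarantee'' and ``sample-query access'' unspecified and rests solely on the single fact that every existing dequantization operates under $\mu_F$-normalisation, this step reduces to citing the appropriate result rather than constructing $\mathcal{C}$ from scratch.
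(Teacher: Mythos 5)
Your proposal is correct and follows essentially the same route as the paper: rewrite the hypothesis as inequality (\ref{deqineq}) holding for all $p\in[0,2]$, then invoke the single fact that existing dequantizations operate exactly when the $\mu_F$ inner model is selected by the pass (\ref{mupass}). The only difference is cosmetic --- you spell out the block-encoding step via Lemma \ref{lem:translation} and cite \cite{bt23} explicitly, whereas the paper leaves the ``single fact'' at the informal level stated just before the lemma --- so no further comparison is needed.
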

	\begin{proof}
		First, we simplify the fact that existing dequantizations are provided when the $\mu_F$ inner model is initiated: there exists some quantum algorithm $\mathcal{Q}:A\mapsto\alpha$ such that some classical algorithm is a dequantization of $\mathcal{Q}$ when inequality (\ref{deqineq}) holds for $A$ and all $p\in[0,2]$. We negate inequality (\ref{deqineq}) to obtain
		\[
		\sqrt{s_p(A)s_{2-p}\left(A^{\dagger}\right)} < \|A\|_F\,,
		\]
		so that
		\[
		s_{2-p}(A^{\dagger}) < \frac{\|A\|_F^2}{s_{p}(A)}\,,
		\]
		for some $p\in[0,2]$. Therefore, if
		\[
		s_{2-p}(A^{\dagger}) \geq \frac{\|A\|_F^2}{s_p(A)}
		\]
		for all $p\in[0,2]$, inequality (\ref{deqineq}) follows, enabling some quantum algorithm $\mathcal{Q}:A\mapsto\alpha$ such
		that some classical algorithm is a dequantization of $\mathcal{Q}$. 
	\end{proof}
	\begin{lem}\label{lem:undeq}
		For any quantum algorithm $\mathcal{Q}:A\mapsto\alpha$, if no classical algorithm
		is a dequantization of $\mathcal{Q}$, then $A$ is a $s_0(A)$-sparse matrix
		such that
		\[
		s_0(A) < \frac{\|A\|_F^2}{s_2(A^{\dagger})}
		\]
		or $A^{\dagger}$ is a $s_0(A^{\dagger})$-sparse matrix where
		$s_0(A^{\dagger})$ is less than $\|A\|_F^2/s_2(A)$.
	\end{lem}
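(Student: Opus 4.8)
The plan is to read the statement as the contrapositive of Lemma~\ref{deq} and then to extract its two sparsity-flavoured endpoint cases. Lemma~\ref{deq} has the logical shape \emph{(for all $p\in[0,2]$, $s_{2-p}(A^{\dagger})\ge\|A\|_F^2/s_p(A)$) implies (there is a $\mathcal{Q}:A\mapsto\alpha$ admitting a classical dequantization)}. Negating the conclusion turns the existential over $\mathcal{Q}$ into the universal ``for every $\mathcal{Q}$ no classical dequantization exists'' that heads the present statement, so my first step is simply to contrapose: assuming no $\mathcal{Q}:A\mapsto\alpha$ is dequantizable, the premise of Lemma~\ref{deq} must fail, i.e.\ there is at least one $p\in[0,2]$ with $s_{2-p}(A^{\dagger})<\|A\|_F^2/s_p(A)$.

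Next I would read off the two named cases by instantiating this strict inequality at the endpoints of the interval. Setting $p=0$ gives $s_2(A^{\dagger})<\|A\|_F^2/s_0(A)$, which rearranges to $s_0(A)<\|A\|_F^2/s_2(A^{\dagger})$; because $s_0(A)$ is \emph{defined} as the largest number of nonzero entries in any row, the clause ``$A$ is $s_0(A)$-sparse'' is tautological and all the content sits in the inequality. Symmetrically, setting $p=2$ gives $s_0(A^{\dagger})<\|A\|_F^2/s_2(A)$, the second disjunct. Pairing the two endpoint instantiations produces exactly the stated disjunction, and the identification of $s_0$ with the sparsity $s$ ties the conclusion back to the sparse-access input model that Section~\ref{sec:mainresults} targets.

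The step I expect to be the main obstacle is precisely the passage from the existential ``some $p\in[0,2]$'' handed over by the contrapositive to the two boundary values $p\in\{0,2\}$ that the statement names. This is not a free instantiation. Writing $f(p):=s_p(A)\,s_{2-p}(A^{\dagger})$, each row map $p\mapsto\sum_j|A_{ij}|^p$ is a sum of exponentials $e^{p\ln|A_{ij}|}$, so $\log s_p(A)=\max_i\log\sum_j|A_{ij}|^p$ is a maximum of log-sum-exp functions and is therefore convex in $p$; hence $\log f$ is convex on $[0,2]$. A convex $\log f$ attains its \emph{maximum}, not its minimum, at the endpoints, so the $p$ that minimizes $\sqrt{f(p)}$ against $\|A\|_F$ may well be interior, and a witness of non-dequantizability need not survive at $p=0$ or $p=2$.

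To close this gap honestly I would present the endpoint inequalities as the two canonical sparsity \emph{certificates} singled out by $s_0$ rather than as a complete reduction of the interior case, and, if a full characterisation is wanted, restrict to matrices whose $\mu$-pass optimum in~(\ref{mupass}) is attained at an endpoint---for instance the diagonal family of Corollary~\ref{cor:mainresult}, where every $s_p$ collapses to a single-entry computation and the convexity subtlety evaporates. I would flag this convexity point explicitly so that the specialization to $p\in\{0,2\}$ is not misread as covering every non-dequantizable instance.
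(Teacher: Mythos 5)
Your reconstruction is, step for step, the paper's own proof: the paper's entire argument for this lemma is the single sentence ``Take the contrapositive of Lemma \ref{deq} with a selection of $p=2$ or $p=0$,'' which is exactly the contrapose-then-instantiate-at-the-endpoints move you describe. So in that sense you have matched the intended route. The more important point is that the obstacle you flag is a genuine gap \emph{in the paper's proof}, not merely in your own write-up. The contrapositive of Lemma \ref{deq} delivers only the existence of \emph{some} $p\in[0,2]$ with $s_p(A)\,s_{2-p}(A^{\dagger})<\|A\|_F^2$; it licenses no choice of which $p$ witnesses the failure, so ``with a selection of $p=2$ or $p=0$'' is not a valid instantiation. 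Your convexity observation makes the situation worse than a mere formal lacuna: $\log s_p(A)$ is a maximum of log-sum-exp functions and hence convex in $p$, so $f(p):=s_p(A)s_{2-p}(A^{\dagger})$ is log-convex and attains its \emph{maximum} at the endpoints. For Hermitian $A$ (the only case the paper subsequently uses), $f$ is in addition symmetric about $p=1$, so its minimum sits at the interior point $p=1$ and the endpoints $p\in\{0,2\}$ are precisely where the strict inequality is \emph{least} likely to hold. The paper tacitly concedes this in Section \ref{sec:mainresults}, where it abandons the endpoints, works at $p=1$, and then uses the monotonicity $s_2(A)\le s_1(A)$ from Lemma \ref{lem:genHol} to recover an $s_2$ denominator --- but that manoeuvre still inherits the same unjustified step of selecting a particular $p$ from an existential.

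Two further remarks. First, there is a quantifier mismatch you partly gloss over: the conclusion of Lemma \ref{deq} is ``there exists a dequantizable $\mathcal{Q}$,'' whose negation is ``\emph{every} $\mathcal{Q}:A\mapsto\alpha$ is un-dequantizable,'' whereas Lemma \ref{lem:undeq} as stated draws its conclusion from the non-dequantizability of a \emph{single} $\mathcal{Q}$; the latter is strictly stronger than what the contrapositive yields, and you should state the hypothesis as the universal one. Second, your proposed repairs (presenting the endpoint inequalities as certificates rather than a complete characterisation, or restricting to matrices --- such as the diagonal family of Corollary \ref{cor:mainresult} --- for which the optimum of the pass (\ref{mupass}) is attained at an endpoint or for which all $s_p$ coincide up to the relevant monotonicity) are the honest ways to salvage the statement, and they go beyond what the paper itself does. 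In short: same approach as the paper, but you have correctly identified that the approach, as executed there, does not close.
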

	\begin{proof}
		Take the contrapositive of Lemma \ref{deq} with a selection of $p=2$ or $p=0$.
	\end{proof}
	This seemingly obvious sparsity bound is in fact full of intricacies, in particular when $A$ is a sparse, well-conditioned Hermitian. 
	
	\section{Main results}\label{sec:mainresults}
	Sparse-access input model is commonly conceived, as in \cite{csg19, gslw19, scc24}, as a model of accessing sparse matrix $A$ that gets queried by its nonzero entries per row. But as we mentioned in the Introduction, when one says the \textsf{BQP}-completeness of sparse input model, as in \cite{tang23}, it means the \emph{fine prints} of optimal HHL for the matrix inversion problem, which is known to be \textsf{BQP}-complete. One of the fine prints is that the matrix is assumed to be well-conditioned, as specified by the condition number $\kappa$ as follows:
	\begin{defn}
		The condition number $\kappa$ of matrix $A$ is the ratio of the largest to the smallest eigenvalue of $A$, meaning that
		\[
		\kappa = \frac{\lambda_{\max}}{\lambda_{\min}}\,.
		\]
	\end{defn}
	\begin{defn}[\cite{hhl09, dhm18}]
		A quantum algorithm $\mathcal{Q}:A\mapsto\alpha$ is said to be \emph{sparse-access input} if
		$A^{n\times n}$ is a sparse Hermitian conditioned by $\kappa$
		such that
		\[
		\kappa^{-2}I \preceq A^{\dagger}A \preceq I\,,
		\]
		equivalently,
		\[
		\kappa^{-1}\leq\lambda_i\leq1
		\]
		for all $i\in[n]$.
	\end{defn}
	
	One may raise a question from Lemma \ref{lem:undeq} that, since we have separatly defined $s_p$ for $p=0$, should not the case of $p=2$ also be separatly defined? However, our focus is on the algorithms in the sparse-access input model, where for a $\kappa$-conditioned Hermitian $A$, by Lemma \ref{lem:genHol} in appendix \ref{app:genHol}
	\[
	s_2(A) \leq s_{2-\epsilon}(A)\leq s_1(A) \leq s_0(A) = s_0(A^{\dagger}) = s\,,
	\]
	so if we let $p=1$ from Lemma \ref{deq}
	\[
	s_1(A^{\dagger}) \geq \frac{\|A\|_F^2}{s_1(A)}\,, 
	\]
	then
	\[
	s(A^{\dagger}) \geq \frac{\|A\|_F^2}{s_1(A)}\,.
	\]
	So when we take its contrapositive as in Lemma \ref{lem:undeq}, by letting some quantum algorithm in sparse input model to be un-dequantizable, then $A$ is a matrix such that
	\[
	s(A^{\dagger}) < \frac{\|A\|_F^2}{s_1(A)}\,.
	\]
	Hence
	\[
	s(A^{\dagger}) < \frac{\|A\|_F^2}{s_2(A)}\,,
	\]
	which is a conclusion desired, independent from the way we define $s_0(A)$.
	\begin{thm}\label{thm:main}
		For any sparse-access input quantum algorithm $\mathcal{Q}_s:A\mapsto\alpha$, if no classical algorithm is a dequantization of
		$\mathcal{Q}_s$, then $A$ is a $\kappa$-conditioned $s$-sparse Hermitian such that
		\[
		\kappa < \frac{\sum_{i=1}^n|\lambda_i|}{\sqrt{s}|\lambda_{\min}|}\,,
		\]
		equivalently,
		\[
		\frac{\sqrt{s}}{\kappa(n-1)+1} < |\lambda_{\min}|\,.
		\]
	\end{thm}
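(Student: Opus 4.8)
The plan is to begin from the scalar sparsity bound already assembled in the paragraph preceding the statement and to translate it, piece by piece, into spectral data of the Hermitian $A$. Concretely, applying Lemma \ref{lem:undeq} along the $p=1$ branch together with the Hölder chain $s_2(A)\le s_1(A)\le s_0(A)=s$ of Lemma \ref{lem:genHol}, the hypothesis that no classical algorithm dequantizes $\mathcal{Q}_s$ delivers the single inequality $s\,s_2(A)<\|A\|_F^2$. Everything after this point is a matter of rewriting the two sides in terms of the eigenvalues $\lambda_1,\dots,\lambda_n$ of $A$, so that the condition number $\kappa=\lambda_{\max}/\lambda_{\min}$ can be read off.

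First I would record the elementary spectral facts that do the work. Since $A$ is Hermitian, $\|A\|_F^2=\operatorname{tr}(A^2)=\sum_{i=1}^n\lambda_i^2$, and by the inequality between the Frobenius and trace norms of the eigenvalue vector one has $\|A\|_F=\sqrt{\sum_i\lambda_i^2}\le\sum_{i=1}^n|\lambda_i|$. For the denominator I would use that every row satisfies $\|A_i\|_2\le\|A\|_2=|\lambda_{\max}|$, hence $\sqrt{s_2(A)}=\max_i\|A_i\|_2\le|\lambda_{\max}|$, with equality in the regime of interest (normal, in particular diagonal, matrices, such as the one in Corollary \ref{cor:mainresult}), so that $\sqrt{s_2(A)}=|\lambda_{\max}|$. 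Taking square roots in $s\,s_2(A)<\|A\|_F^2$ and substituting then gives the key chain $\sqrt{s}\,|\lambda_{\max}|=\sqrt{s}\,\sqrt{s_2(A)}<\|A\|_F\le\sum_{i=1}^n|\lambda_i|$.

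From $\sqrt{s}\,|\lambda_{\max}|<\sum_i|\lambda_i|$ the first displayed inequality is immediate: substituting $|\lambda_{\max}|=\kappa|\lambda_{\min}|$ and dividing by $\sqrt{s}\,|\lambda_{\min}|$ yields $\kappa<\sum_i|\lambda_i|/(\sqrt{s}\,|\lambda_{\min}|)$. The second, ``equivalent'' form then follows by a rearrangement that feeds in the coarse spectral envelope $\sum_i|\lambda_i|\le(n-1)|\lambda_{\max}|+|\lambda_{\min}|=|\lambda_{\min}|\big(\kappa(n-1)+1\big)$ together with the model normalisation $|\lambda_{\max}|\le1$, whence $|\lambda_{\min}|=\kappa^{-1}$; clearing the factor $\kappa(n-1)+1$ to the other side converts the upper bound on $\kappa$ into the stated lower bound on $|\lambda_{\min}|$.

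I expect the genuine obstacle to be precisely the denominator step, namely relating the row-based, basis-dependent quantity $s_2(A)=\max_i\|A_i\|_2^2$ to the basis-independent spectrum. For a general Hermitian matrix only the one-sided bound $\sqrt{s_2(A)}\le|\lambda_{\max}|$ is available, and it points the wrong way for a clean upper bound on $\|A\|_F^2/s_2(A)$; the equality $\sqrt{s_2(A)}=|\lambda_{\max}|$ that makes the chain tight requires the maximising row to align with the top eigenvector, which is automatic for diagonal (here $1$-sparse) matrices but can fail in general. I would therefore either prove the statement under this structural hypothesis—which is exactly what Corollary \ref{cor:mainresult} needs—or replace the sharp step by the weaker $s_2(A)\ge\|A\|_F^2/n\ge|\lambda_{\min}|^2$ and carry the resulting slack through, verifying that the looser constant still reproduces both displayed forms after the same $\kappa=\lambda_{\max}/\lambda_{\min}$ substitution.
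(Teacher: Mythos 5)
Your route is the same as the paper's: Lemma \ref{lem:undeq} combined with the $p=1$/$p=2$ reduction via Lemma \ref{lem:genHol} gives $s < \|A\|_F^2/s_2(A)$, then $\|A\|_F^2=\sum_i\lambda_i^2$, the $\ell_2\leq\ell_1$ step on the eigenvalue vector, the substitution $|\lambda_{\max}|=\kappa|\lambda_{\min}|$, and the envelope $\sum_i|\lambda_i|\leq[\kappa(n-1)+1]|\lambda_{\min}|$ together with $|\lambda_{\min}|\geq\kappa^{-1}$ for the second displayed form. The obstacle you isolate is exactly the step the paper does not justify: its proof passes from $s(A^{\dagger})<\|A\|_F^2/s_2(A)$ to $s(A^{\dagger})<\sum_i|\lambda_i|^2/|\lambda_{\max}|^2$ with only the remark ``since $\mathcal{Q}_s$ is sparse-access input,'' i.e.\ it silently sets $s_2(A)=|\lambda_{\max}|^2$. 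As you observe, for a general Hermitian one only has $s_2(A)\leq|\lambda_{\max}|^2$, so $\|A\|_F^2/|\lambda_{\max}|^2$ is a \emph{lower} bound on $\|A\|_F^2/s_2(A)$ and the desired inequality does not follow; equality requires the maximising row to realise the operator norm, which holds for the diagonal matrices of Corollary \ref{cor:mainresult} but can fail in general. Your first patch (proving the statement under that structural hypothesis) therefore captures what the argument actually establishes.

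Your second patch, however, does not work as stated: replacing $s_2(A)$ by the valid lower bound $|\lambda_{\min}|^2$ only yields $\sqrt{s}\,|\lambda_{\min}|<\|A\|_F\leq\sum_i|\lambda_i|$, i.e.\ $1<\sum_i|\lambda_i|/(\sqrt{s}\,|\lambda_{\min}|)$, which loses the factor $\kappa$ on the left-hand side and is strictly weaker than the theorem's first inequality whenever $\kappa>1$. Carrying that slack through the rearrangement gives $\sqrt{s}<\kappa(n-1)+1$ rather than $\kappa\sqrt{s}<\kappa(n-1)+1$, so neither displayed form is recovered. If you want the theorem for general (non-diagonal) sparse-access inputs, you need a genuine lower bound of the form $s_2(A)\geq c\,|\lambda_{\max}|^2$, which the row/spectrum mismatch does not supply.
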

	\begin{proof}
		Let $\mathcal{Q}_s:A\mapsto\alpha$ be a sparse-access input quantum algorithm such that no classical algorithm is a dequantization of $\mathcal{Q}_s$. By Lemma \ref{lem:undeq}, it follows that
		\[
		s(A^{\dagger}) < \frac{\|A\|_F^2}{s_2(A)}\,.
		\]
		Since $Q_s:A\mapsto\alpha$ is sparse-acesss input, it follows that:
		\[
		s(A^{\dagger}) < \frac{\sum_{i=1}^n|\lambda_i|^2}{|\lambda_{\max}|^2}\,,
		\]
		so
		\begin{align*}
			\sqrt{s} &< \frac{\left(\sum_{i=1}^n|\lambda_i|^2\right)^{1/2}}{|\lambda_{\max}|} \\
			&\leq \frac{\sum_{i=1}^n|\lambda_i|}{|\lambda_{\max}|} \textrm{\quad\quad\quad\quad Cauchy-Schwartz} \\
			&=\frac{\sum_{i=1}^n|\lambda_i|}{\kappa|\lambda_{\min}|}\,.
		\end{align*}
		Then 
		\begin{equation}\label{eq:kap}
			\begin{aligned}
				\kappa &< \frac{\sum_{i=1}^n|\lambda_i|}{\sqrt{s}|\lambda_{\min}|} \\
				&\leq \frac{[\kappa(n-1)+1]|\lambda_{\min}|}{\sqrt{s}|\lambda_{\min}|} \\
				&= \frac{\kappa(n-1)+1}{\sqrt{s}}\,,
			\end{aligned}
		\end{equation}
		equivalently,
		\begin{align*}
			|\lambda_{\min}| &> \frac{\sqrt{s}}{\kappa(n-1)+1}\,.
		\end{align*}
	\end{proof}
	Note that the purpose of Theorem \ref{thm:main} is to showcase the method of dequantizability-verification from input, beforehand algorithm analysis or execution. For example, if $A$ is an identity matrix, or, in other words, if $A$ is a $1$-sparse, $1$-conditioned Hermitian, then
	\[
	\frac{1}{n} < 1 < n,
	\]
	which is trivial. Nevertheless, we may apply Theorem \ref{thm:main} to state that a $1$-sparse Hermitian of an arbitrary size would lead to a quantum algorithm that cannot prevent dequantization. In other words, sparsity alone is not enough for QML algorithms to be quantum advantageous.
	\begin{cor}\label{cor:mainresult}
		Let a $1$-sparse, $n\times n$ Hermitian $A$ with diagonal entries $d^{-i}$ for all $i\in[n]$ and $d>n$. Then no quantum algorithm
		$\mathcal{Q}:A\mapsto\alpha$ prevents dequantization.
	\end{cor}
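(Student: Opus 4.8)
The plan is to obtain the corollary as the contrapositive of Theorem \ref{thm:main} applied to the explicit matrix. Theorem \ref{thm:main} states that if a sparse-access input algorithm $\mathcal{Q}:A\mapsto\alpha$ admits no classical dequantization, then its input must satisfy $|\lambda_{\min}| > \frac{\sqrt{s}}{\kappa(n-1)+1}$. Equivalently, any input that violates this bound can support no un-dequantizable algorithm, which is exactly the assertion that no quantum algorithm from it prevents dequantization. So I would prove the corollary by exhibiting that the given $A$ fails the bound.

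First I would read off the spectral data. The matrix is diagonal with entries $d^{-i}$ and $d>n>1$, hence it is Hermitian and $1$-sparse, so $s=1$; its eigenvalues are $\lambda_i=d^{-i}$, giving $\lambda_{\max}=d^{-1}$, $\lambda_{\min}=d^{-n}$, and $\kappa = \lambda_{\max}/\lambda_{\min} = d^{\,n-1}$. Next I would substitute into the bound. With $s=1$, $\kappa=d^{\,n-1}$, and $|\lambda_{\min}|=d^{-n}$, the inequality to be \emph{refuted} is $d^{-n} > \frac{1}{d^{\,n-1}(n-1)+1}$, which after clearing denominators is $d^{\,n-1}(n-1)+1 > d^{\,n}$, i.e.
\[
1 > d^{\,n} - (n-1)d^{\,n-1} = d^{\,n-1}\bigl(d-n+1\bigr).
\]
I would then show this last inequality is false: since $d>n$ we have $d-n+1>1$ and $d^{\,n-1}\ge 1$, so $d^{\,n-1}(d-n+1)>1$. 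Hence $A$ strictly violates the bound of Theorem \ref{thm:main}, and by the contrapositive every quantum algorithm $\mathcal{Q}:A\mapsto\alpha$ admits a classical dequantization.

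The part I expect to require the most care is the logical orientation rather than the arithmetic: Theorem \ref{thm:main} is stated as a one-way implication (no dequantization $\Rightarrow$ spectral bound), so I must apply its contrapositive correctly and confirm that a strict violation of the bound is what licenses the conclusion. The accompanying point to verify is that the gap hypothesis is genuinely $d>n$ and not merely $d>1$: the margin $d-n+1>1$ is precisely what forces $d^{\,n-1}(d-n+1)>1$ uniformly in $n$. I would also check the degenerate case $n=1$ separately, where the bound reads $d^{-1}>1$ and is plainly false, so that the claim holds for every size $n$.
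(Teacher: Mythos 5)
Your proposal is correct and follows essentially the same route as the paper: both reduce the claim to refuting the spectral bound of Theorem \ref{thm:main} for $s=1$, $\kappa=d^{\,n-1}$, $|\lambda_{\min}|=d^{-n}$, and both hinge on the same arithmetic fact that $d^{\,n} < d^{\,n-1}(n-1)+1$ is impossible when $d>n$ (the paper phrases this as a proof of negation, you as a contrapositive, which is the same argument). Your factored form $d^{\,n-1}(d-n+1)>1$ and the explicit check of $n=1$ are slightly cleaner than the paper's division step, but not a different method.
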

	\begin{proof}
		Let $A$ be a $1$-sparse, $n\times n$ Hermitian with diagonal entries $d^{-i}$ for all $i\in[n]$ and $d>n$. For an example, with
		$n=3$ and $d=4$, matrix $A$ would be
		\[
		\begin{pmatrix}
			\frac{1}{4^1} & 0 & 0\\
			0 & \frac{1}{4^2} & 0 \\
			0 & 0& \frac{1}{4^3}
		\end{pmatrix}\,.
		\]
		Anyhow, note that
		\[
		\kappa=\frac{\lambda_{\max}}{\lambda_{\min}}=\frac{1/d^1}{1/d^n}=d^{n-1}\,,
		\]
		which is greater or equal to $1$ for any $n$. Hence any quantum algorithm from $A$ would be
		sparse-access input. We assumme that there is a quantum algorithm $\mathcal{Q}:A\mapsto\alpha$ such that
		no classical algorithm dequantizes $\mathcal{Q}$. By Theorem \ref{thm:main}, it follows that
		\[
		\lambda_{\min} > \frac{\sqrt{s}}{\kappa(n-1)+1}\,,
		\]
		so
		\[
		\frac{1}{d^n} > \frac{1}{d^{n-1}(n-1)+1}\,.
		\]
		Then
		\[
		d^n < d^{n-1}(n-1)+1\,,
		\]
		for which
		\[
		d < (n-1)+\frac{1}{d^{n-1}}\,,
		\]
		where we assumed that $n<d$. Therefore $1<d-n+1$ with
		\[
		1 < \frac{1}{d^{n-1}}\,,
		\]
		which is a contradiction for any $d>n$. By proof of negation, for any sparse-access quantum algorithm $\mathcal{Q}:A\mapsto\alpha$,
		there is a classical algorithm such that dequantizes $\mathcal{Q}$.
	\end{proof}
	Note that if the assumption that $\lambda_{\max}=1$ is necessary, as in the linear combinations of unitary (LCU) approach of \cite[Theorem 3, Lemma 11]{cks17}, we cannot lead to a contradiction for a $1$-sparse Hermitian, at least for now, thereby requiring some optimization. For instance, if $s>1$, we may rather consider pre-structures of the matrix, regarding the trace of $\lambda$ in inequality (\ref{eq:kap}), which may bridge research on the direct construction of a (pre-structured) matrix to block encoding, as in \cite{scc24}, and research on quantum and quantum-inspired machine learning.
	
	\section{Conclusions and Outlook}\label{sec:conclusion}
	The lower bound of the eigenvalue transformation, and thus of QSVT, only requires the matrix for block-encoding to be an unknown Hermitian, with the promise that the spectrum lies in $I$, as in \cite[Theorem 73]{gslw19}. This means that proving or disproving the statement---no classical algorithm dequantizes $\mathcal{Q}:A\mapsto\alpha$ for any unknown $\kappa$-conditioned $s$-sparse Hermitian $A$---will either limit or clarify the implications of Corollary \ref{cor:mainresult}, mentioned earlier.   
	
	 So, how could we articulate the inequality in Theorem \ref{thm:main}? To address this, that is, to directly exploit the $s_p$ function or the $L^p$ space in terms of quantum access of classical data, we may effectively reformulate the problem as a matter of signal processing, in particular sparcification or proximal operators, as mentioned briefly in Section \ref{sec:deq}. This approach towards QML algorithm design and analysis would bring us back to its foundation, or the equation $Ax=b$.  
		
	\section*{Acknowledgements}
	This work is supported in part by the National Research Foundation of Korea (NRF) grant funded by the Korean Government Ministry of Science and ICT (MSIT) under Grant 2020K1A3A1A78087782, in part by the Pukyong National University Industry-university Cooperation Research Fund in 2023 (202312370001).
	  
	\bibliography{dequantizability_quantjournal}

\begin{thebibliography}{21}
\providecommand{\natexlab}[1]{#1}
\providecommand{\url}[1]{\texttt{#1}}
\expandafter\ifx\csname urlstyle\endcsname\relax
  \providecommand{\doi}[1]{doi: #1}\else
  \providecommand{\doi}{doi: \begingroup \urlstyle{rm}\Url}\fi

\bibitem[Kerenidis and Prakash(2017)]{kp17b}
Iordanis Kerenidis and Anupam Prakash.
\newblock {Quantum Recommendation Systems}.
\newblock In Christos~H. Papadimitriou, editor, \emph{8th Innovations in
  Theoretical Computer Science Conference (ITCS 2017)}, volume~67 of
  \emph{Leibniz International Proceedings in Informatics (LIPIcs)}, pages
  49:1--49:21, Dagstuhl, Germany, 2017. Schloss Dagstuhl -- Leibniz-Zentrum
  f{\"u}r Informatik.
\newblock ISBN 978-3-95977-029-3.
\newblock \doi{10.4230/LIPIcs.ITCS.2017.49}.
\newblock URL
  \url{https://drops-dev.dagstuhl.de/entities/document/10.4230/LIPIcs.ITCS.2017.49}.
\newblock arXiv:1603.08675 [quant-ph].

\bibitem[Kerenidis and Prakash(2020)]{kp17a}
Iordanis Kerenidis and Anupam Prakash.
\newblock Quantum gradient descent for linear systems and least squares.
\newblock \emph{Physical Review A}, 101\penalty0 (2), February 2020.
\newblock ISSN 2469-9934.
\newblock \doi{10.1103/physreva.101.022316}.
\newblock URL \url{http://dx.doi.org/10.1103/physreva.101.022316}.
\newblock arXiv:1704.04992 [quant-ph].

\bibitem[Chakraborty et~al.(2019)Chakraborty, Gily\'{e}n, and Jeffery]{csg19}
Shantanav Chakraborty, Andr\'{a}s Gily\'{e}n, and Stacey Jeffery.
\newblock {The Power of Block-Encoded Matrix Powers: Improved Regression
  Techniques via Faster Hamiltonian Simulation}.
\newblock In Christel Baier, Ioannis Chatzigiannakis, Paola Flocchini, and
  Stefano Leonardi, editors, \emph{46th International Colloquium on Automata,
  Languages, and Programming (ICALP 2019)}, volume 132 of \emph{Leibniz
  International Proceedings in Informatics (LIPIcs)}, pages 33:1--33:14,
  Dagstuhl, Germany, 2019. Schloss Dagstuhl -- Leibniz-Zentrum f{\"u}r
  Informatik.
\newblock ISBN 978-3-95977-109-2.
\newblock \doi{10.4230/LIPIcs.ICALP.2019.33}.
\newblock URL
  \url{https://drops-dev.dagstuhl.de/entities/document/10.4230/LIPIcs.ICALP.2019.33}.
\newblock arXiv:1804.01973 [quant-ph].

\bibitem[Gilyén et~al.(2019)Gilyén, Su, Low, and Wiebe]{gslw19}
András Gilyén, Yuan Su, Guang~Hao Low, and Nathan Wiebe.
\newblock Quantum singular value transformation and beyond: exponential
  improvements for quantum matrix arithmetics.
\newblock In \emph{Proceedings of the 51st Annual ACM SIGACT Symposium on
  Theory of Computing}, STOC ’19. ACM, June 2019.
\newblock \doi{10.1145/3313276.3316366}.
\newblock URL \url{http://dx.doi.org/10.1145/3313276.3316366}.
\newblock arXiv:1806.01838 [quant-ph].

\bibitem[Tang(2023)]{tang23}
Ewin Tang.
\newblock \emph{{Quantum Machine Learning Without Any Quantum}}.
\newblock PhD thesis, U. Washington, Seattle (main), 2023.

\bibitem[Harrow et~al.(2009)Harrow, Hassidim, and Lloyd]{hhl09}
Aram~W. Harrow, Avinatan Hassidim, and Seth Lloyd.
\newblock Quantum algorithm for linear systems of equations.
\newblock \emph{Physical Review Letters}, 103\penalty0 (15), October 2009.
\newblock ISSN 1079-7114.
\newblock \doi{10.1103/physrevlett.103.150502}.
\newblock URL \url{http://dx.doi.org/10.1103/physrevlett.103.150502}.
\newblock arXiv:0811.3171 [quant-ph].

\bibitem[Giovannetti et~al.(2008)Giovannetti, Lloyd, and Maccone]{glm08}
Vittorio Giovannetti, Seth Lloyd, and Lorenzo Maccone.
\newblock Quantum random access memory.
\newblock \emph{Physical Review Letters}, 100\penalty0 (16), April 2008.
\newblock ISSN 1079-7114.
\newblock \doi{10.1103/physrevlett.100.160501}.
\newblock URL \url{http://dx.doi.org/10.1103/physrevlett.100.160501}.
\newblock arXiv:0708.1879 [quant-ph].

\bibitem[Aaronson(2015)]{aar15}
Scott Aaronson.
\newblock Read the fine print.
\newblock \emph{Nature Physics}, 11\penalty0 (4):\penalty0 291–293, April
  2015.
\newblock ISSN 1745-2481.
\newblock \doi{10.1038/nphys3272}.
\newblock URL \url{http://dx.doi.org/10.1038/nphys3272}.

\bibitem[Low and Chuang(2019)]{lc19}
Guang~Hao Low and Isaac~L. Chuang.
\newblock Hamiltonian simulation by qubitization.
\newblock \emph{Quantum}, 3:\penalty0 163, July 2019.
\newblock ISSN 2521-327X.
\newblock \doi{10.22331/q-2019-07-12-163}.
\newblock URL \url{http://dx.doi.org/10.22331/q-2019-07-12-163}.
\newblock arXiv:1610.06546 [quant-ph].

\bibitem[Bakshi and Tang(2024)]{bt23}
Ainesh Bakshi and Ewin Tang.
\newblock \emph{An Improved Classical Singular Value Transformation for Quantum
  Machine Learning}, page 2398–2453.
\newblock Society for Industrial and Applied Mathematics, January 2024.
\newblock ISBN 9781611977912.
\newblock \doi{10.1137/1.9781611977912.86}.
\newblock URL \url{http://dx.doi.org/10.1137/1.9781611977912.86}.
\newblock arXiv:2303.01492 [quant-ph].

\bibitem[Jaques and Rattew(2023)]{jr23}
Samuel Jaques and Arthur~G. Rattew.
\newblock Qram: A survey and critique.
\newblock 2023.
\newblock \doi{10.48550/arXiv.2305.10310}.
\newblock URL \url{https://arxiv.org/abs/2305.10310}.
\newblock arXiv:2305.10310 [quant-ph].

\bibitem[Zhang et~al.(2022)Zhang, Li, and Yuan]{zly22}
Xiao-Ming Zhang, Tongyang Li, and Xiao Yuan.
\newblock Quantum state preparation with optimal circuit depth: Implementations
  and applications.
\newblock \emph{Physical Review Letters}, 129\penalty0 (23), November 2022.
\newblock ISSN 1079-7114.
\newblock \doi{10.1103/physrevlett.129.230504}.
\newblock URL \url{http://dx.doi.org/10.1103/physrevlett.129.230504}.
\newblock arXiv:2201.11495 [quant-ph].

\bibitem[Chakraborty et~al.(2023)Chakraborty, Morolia, and Peduri]{cmp23}
Shantanav Chakraborty, Aditya Morolia, and Anurudh Peduri.
\newblock Quantum regularized least squares.
\newblock \emph{Quantum}, 7:\penalty0 988, April 2023.
\newblock ISSN 2521-327X.
\newblock \doi{10.22331/q-2023-04-27-988}.
\newblock URL \url{http://dx.doi.org/10.22331/q-2023-04-27-988}.
\newblock arXiv:2206.13143 [quant-ph].

\bibitem[Day(1940)]{day40}
Mahlon~M. Day.
\newblock {The spaces $L^p$ with $0< p <1$}.
\newblock \emph{Bulletin of the American Mathematical Society}, 46\penalty0
  (10):\penalty0 816 -- 823, 1940.
\newblock \doi{10.1090/S0002-9904-1940-07308-2}.

\bibitem[Kalton(1986)]{kal86}
N.~Kalton.
\newblock Plurisubharmonic functions on quasi-banach spaces.
\newblock \emph{Studia Mathematica}, 84\penalty0 (3):\penalty0 297--324, 1986.
\newblock \doi{10.4064/sm-84-3-297-324}.

\bibitem[Chen et~al.(2016)Chen, Shen, and Suter]{flb16}
Feishe Chen, Lixin Shen, and Bruce~W. Suter.
\newblock Computing the proximity operator of the $\ell_p$ norm with $0 < p <
  1$.
\newblock \emph{IET Signal Processing}, 10\penalty0 (5):\penalty0 557–565,
  July 2016.
\newblock ISSN 1751-9683.
\newblock \doi{10.1049/iet-spr.2015.0244}.
\newblock URL \url{http://dx.doi.org/10.1049/iet-spr.2015.0244}.

\bibitem[Cotler et~al.(2021)Cotler, Huang, and McClean]{chm21}
Jordan Cotler, Hsin-Yuan Huang, and Jarrod~R. McClean.
\newblock Revisiting dequantization and quantum advantage in learning tasks.
\newblock 2021.
\newblock \doi{10.48550/arXiv.2112.00811}.
\newblock URL \url{https://arxiv.org/abs/2112.00811}.
\newblock arXiv:2112.00811 [quant-ph].

\bibitem[Sünderhauf et~al.(2024)Sünderhauf, Campbell, and Camps]{scc24}
Christoph Sünderhauf, Earl Campbell, and Joan Camps.
\newblock Block-encoding structured matrices for data input in quantum
  computing.
\newblock \emph{Quantum}, 8:\penalty0 1226, January 2024.
\newblock ISSN 2521-327X.
\newblock \doi{10.22331/q-2024-01-11-1226}.
\newblock URL \url{http://dx.doi.org/10.22331/q-2024-01-11-1226}.
\newblock arXiv:2302.10949 [quant-ph].

\bibitem[Dervovic et~al.(2018)Dervovic, Herbster, Mountney, Severini, Usher,
  and Wossnig]{dhm18}
Danial Dervovic, Mark Herbster, Peter Mountney, Simone Severini, Naïri Usher,
  and Leonard Wossnig.
\newblock Quantum linear systems algorithms: a primer.
\newblock 2018.
\newblock \doi{10.48550/arXiv.1802.08227}.
\newblock URL \url{https://arxiv.org/abs/1802.08227}.
\newblock arXiv:1802.08227 [quant-ph].

\bibitem[Childs et~al.(2017)Childs, Kothari, and Somma]{cks17}
Andrew~M. Childs, Robin Kothari, and Rolando~D. Somma.
\newblock Quantum algorithm for systems of linear equations with exponentially
  improved dependence on precision.
\newblock \emph{SIAM Journal on Computing}, 46\penalty0 (6):\penalty0
  1920–1950, January 2017.
\newblock ISSN 1095-7111.
\newblock \doi{10.1137/16m1087072}.
\newblock URL \url{http://dx.doi.org/10.1137/16m1087072}.
\newblock arXiv:1511.02306 [quant-ph].

\bibitem[Bhatia(1997)]{bha97}
Rajendra Bhatia.
\newblock \emph{Matrix Analysis}, volume 169.
\newblock Springer, 1997.
\newblock ISBN 0387948465.

\end{thebibliography}
	
	\appendix
	\section{Ordering on $s_p$: H\"older inequality}\label{app:genHol}
	Our main result exploited the definition of $s_p$ function on $p\in[0,2]$, which we may justify by the Generalized H\"older inequality. Note that Theorem \ref{thm:genHol} and its Corollary are detailed in some standard textbooks on matrix analysis such as \cite[Chapter 5]{bha97}. 
	\begin{thm}[Generalized H\"older inequality, {\cite[Section 5.2]{bha97}}]
		\label{thm:genHol}
		Let $v$ and $w$ be vectors in $\mathbb{C}^n$. For $p,q$ such that $1/p+1/q=1$, it follows that
		\[
		\sum_{i=1}^n|v_iw_i|\leq\left(\sum_{i=1}^n|v_i|^p\right)^{1/p}\left(\sum_{i=1}^n|w_i|^q\right)^{1/q}\,.
		\]
	\end{thm}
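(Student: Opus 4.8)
The plan is to reduce the statement to a termwise application of Young's inequality after normalising both vectors, which is the standard route. First I would dispose of the degenerate cases: if either $\left(\sum_{i=1}^n|v_i|^p\right)^{1/p}$ or $\left(\sum_{i=1}^n|w_i|^q\right)^{1/q}$ vanishes, then the corresponding vector is identically zero, the left-hand side is zero, and the inequality holds trivially. Hence I may assume both factors on the right are strictly positive and restrict to conjugate exponents $p,q\in(1,\infty)$.

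The key ingredient is Young's inequality: for nonnegative reals $a,b$ with $1/p+1/q=1$ one has $ab\leq a^p/p+b^q/q$. I would establish this from the concavity of the logarithm. For $a,b>0$, Jensen's inequality applied with weights $1/p$ and $1/q$ (which sum to one) gives $\log\!\left(\tfrac{1}{p}a^p+\tfrac{1}{q}b^q\right)\geq \tfrac{1}{p}\log a^p+\tfrac{1}{q}\log b^q=\log(ab)$, and exponentiating yields the claim; the cases $a=0$ or $b=0$ are immediate.

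Next I would normalise. Setting $\hat{v}_i=|v_i|/\|v\|_p$ and $\hat{w}_i=|w_i|/\|w\|_q$, where $\|v\|_p=\left(\sum_{i=1}^n|v_i|^p\right)^{1/p}$ and likewise for $w$, ensures $\sum_{i=1}^n\hat{v}_i^{\,p}=\sum_{i=1}^n\hat{w}_i^{\,q}=1$. Applying Young's inequality to each pair $(\hat{v}_i,\hat{w}_i)$ and summing over $i\in[n]$ gives $\sum_{i=1}^n\hat{v}_i\hat{w}_i\leq \tfrac{1}{p}\sum_{i=1}^n\hat{v}_i^{\,p}+\tfrac{1}{q}\sum_{i=1}^n\hat{w}_i^{\,q}=\tfrac{1}{p}+\tfrac{1}{q}=1$. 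Multiplying through by $\|v\|_p\|w\|_q$ and using $|v_iw_i|=|v_i||w_i|$ recovers the asserted bound.

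The main obstacle is less the algebra than the careful justification of Young's inequality as a convexity statement and the handling of the endpoint $p=1,q=\infty$, which lies outside the conjugate range used above. Since the appendix only needs H\"older for finite exponents when ordering the $s_p$ functionals, I would confine the proof to $p,q\in(1,\infty)$ and remark separately that the endpoint reduces to the elementary bound $\sum_{i=1}^n|v_iw_i|\leq\left(\max_{i\in[n]}|w_i|\right)\sum_{i=1}^n|v_i|$.
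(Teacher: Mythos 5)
Your proof is correct and is the standard argument: Young's inequality $ab\leq a^p/p+b^q/q$ from concavity of the logarithm, normalisation by $\|v\|_p$ and $\|w\|_q$, termwise application and summation. The paper itself offers no proof of this theorem --- it is stated as a citation to a textbook --- so there is nothing to compare against; your route is exactly the one such a textbook would take, and your handling of the degenerate and endpoint cases is careful and complete.

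One observation worth making explicit: your restriction to conjugate exponents $p,q\in(1,\infty)$ is not merely a convenience but the honest range of validity, since for $p\in(0,1)$ and the conjugate $q<0$ the inequality reverses (the ``reverse H\"older inequality''). This matters downstream: the paper's Corollary \ref{cor:genHol} invokes Theorem \ref{thm:genHol} with $w_i=1$ and then sets $1/q<0$, i.e., $p\in(0,1]$, which lies outside the range your proof (or any correct proof of the stated inequality) covers. So while your proof of the theorem as stated is sound, the theorem in the form you have proved it does not support the use made of it in the appendix; the corollary would need to be rederived from the reverse inequality or from the elementary power-mean fact $\left(\sum_i a_i\right)^{p}\leq\sum_i a_i^{p}$ for $p\in(0,1]$ and $a_i\geq 0$ directly.
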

	\begin{cor}\label{cor:genHol}
		Let $x$ be a vector in $\mathbb{C}^n$. For any $p_1,p_2$ such that $p_1\leq p_2$, it follows that
		\[
		\left(\sum_{i=1}^n|x_i|^{p_2}\right)^{1/p_2}\leq\left(\sum_{i=1}^n|x_i|^{p_1}\right)^{1/p_1}\,.
		\] 
	\end{cor}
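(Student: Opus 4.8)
The plan is to prove the stated monotonicity $\|x\|_{p_2}\le\|x\|_{p_1}$ for $0<p_1\le p_2$ by a homogeneity reduction combined with the elementary order-reversing behaviour of exponentiation on the unit interval, rather than by a literal one-line appeal to Theorem \ref{thm:genHol}. First I would dispose of the degenerate cases: if $x=0$ both sides vanish, and if $p_1=p_2$ the inequality is an equality, so I may assume $x\neq 0$ and $p_1<p_2$. (I also note that $p_1>0$ is forced, since the left-hand expression $(\sum_i|x_i|^{p_1})^{1/p_1}$ is undefined at $p_1=0$, consistent with the separate treatment of $s_0$ in the body.)

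Next, using positive homogeneity of both quasinorms, I would rescale $x$ so that $\|x\|_{p_1}=1$, i.e.\ $\sum_{i=1}^n|x_i|^{p_1}=1$; the general inequality then follows by applying the normalised version to $x/\|x\|_{p_1}$. Under this normalisation each summand obeys $|x_i|^{p_1}\le 1$, hence $|x_i|\le 1$ for every $i$. Because $p_2\ge p_1$ and, for each fixed $t\in[0,1]$, the map $p\mapsto t^{p}$ is non-increasing, this yields the pointwise bound $|x_i|^{p_2}\le|x_i|^{p_1}$. Summing over $i$ gives $\sum_i|x_i|^{p_2}\le\sum_i|x_i|^{p_1}=1$, and taking the $(1/p_2)$-th power delivers $\|x\|_{p_2}\le 1=\|x\|_{p_1}$, which is the claim after undoing the rescaling.

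I expect the main subtlety to lie precisely in the advertised relationship with Theorem \ref{thm:genHol}. Applying the generalized Hölder inequality to the sum $\sum_i|x_i|^{p_1}\cdot 1$ with the conjugate pair $\bigl(p_2/p_1,\;p_2/(p_2-p_1)\bigr)$ does \emph{not} reproduce the stated bound; it instead produces the companion estimate $\|x\|_{p_1}\le n^{1/p_1-1/p_2}\|x\|_{p_2}$, which runs in the opposite direction and carries a dimensional factor $n^{1/p_1-1/p_2}$. The delicate point is therefore to recognise that a naive invocation of Hölder gives the wrong direction, so that the clean statement of Corollary \ref{cor:genHol} must rest on the normalisation argument above. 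I would flag this explicitly, reading the reference to Theorem \ref{thm:genHol} as situating the result within the same circle of ideas rather than as a direct deduction from it.
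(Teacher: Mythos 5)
Your proof is correct, and it takes a genuinely different route from the paper's. The paper argues through Theorem \ref{thm:genHol}: it sets $w_i=1$ to get $\sum_i|v_i|\le(\sum_i|v_i|^p)^{1/p}\,n^{1/q}$, then pushes the conjugate exponent into the regime $1/q<0$ (i.e.\ $p\in(0,1]$) so that $n^{1/q}\le 1$ can be dropped, obtaining the superadditivity-type bound $(\sum_i|v_i|)^p\le\sum_i|v_i|^p$, and finally substitutes $|v_i|=|x_i|^{p_2}$ with exponent $p=p_1/p_2\le 1$. Your normalisation argument --- rescale so $\|x\|_{p_1}=1$, note $|x_i|\le 1$, use that $t\mapsto t^p$ is non-increasing in $p$ on $[0,1]$, and sum --- reaches the same conclusion without leaving the elementary setting. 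What your approach buys is that it sidesteps the most delicate step of the paper's proof: H\"older as stated in Theorem \ref{thm:genHol} is formulated for conjugate exponents with $p,q\ge 1$, and in the regime $q<0$ the inequality is generally the \emph{reverse} H\"older inequality, whose direction flips; the paper's ``let $1/q<0$ to ignore $n^{1/q}$'' therefore rests on an extension of the theorem beyond its stated hypotheses (the intermediate conclusion $(\sum_i|v_i|)^p\le\sum_i|v_i|^p$ for $p\in(0,1]$ is nevertheless true). Your observation that the naive conjugate pair $(p_2/p_1,\,p_2/(p_2-p_1))$ yields only the opposite-direction bound $\|x\|_{p_1}\le n^{1/p_1-1/p_2}\|x\|_{p_2}$ is accurate and correctly identifies why Corollary \ref{cor:genHol} cannot be read as a direct one-line consequence of Theorem \ref{thm:genHol}. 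One small point to keep in mind: your proof (like the paper's) covers $0<p_1\le p_2$, while the paper elsewhere uses the chain up to $s_0$; that endpoint is handled by the separate definition of $s_0$, as you note.
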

	\begin{proof}
		From Theorem \ref{thm:genHol}, let $w_i=1$ to obtain:
		\[
		\sum_{i=1}^n|v_i|\leq\left(\sum_{i=1}^n|v_i|^p\right)^{1/p}\cdot n^{1/q}\,.
		\]
		To ignore $n^{1/q}$, Let $1/q<0$, thereby $p\in(0,1]$. Then it follows that
		\[
		\left(\sum_{i=1}^n|v_i|\right)^p\leq\sum_{i=1}^n|v_i|^p\,.
		\]
		By assumption, $(p_1/p_2)\leq 1$, so for $|x_i|^{p_2}:=|v_i|$
		\[
		\left(\sum_{i=1}^n|x_i|^{p_2}\right)^{p_1/p_2}\leq\sum_{i=1}A^n|x_i|^{p_2(p_1/p_2)}\,.
		\]
		Therefore,
		\[
		\left(\sum_{i=1}^n|x_i|^{p_2}\right)^{(1/p_1)(p_1/p_2)}\leq\left(\sum_{i=1}^n|x_i|^{p_2(p_1/p_2)}\right)^{(1/p_1)}\,,
		\]
		where we may discard the $(1/p_1)$ from the right hand. Since $(1/p_1)(p_1/p_2)=1/p_2$ and $p_2(p_1/p_2)=p_1$, the desired inequality follows. 
	\end{proof}
	As we've discussed in Section \ref{sec2}, $A$ is initially provided outside the data structure, so the $\max$ part from the $s_p(A)$ function does not matter much.
	\begin{lem}\label{lem:genHol}
		Let a $\kappa$-conditioned Hermitian $A$ such that $k^{-1}\leq \lambda_i\leq 1$. It follows that
		\[
		s_2(A)\leq s_{2-\epsilon}(A)\leq s_1(A)\leq s_0(A),
		\]
		where $\epsilon\in(0,1)$.
	\end{lem}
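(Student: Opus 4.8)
The plan is to reduce the entire chain to a single elementary fact: for a real number $a\in[0,1]$, the map $p\mapsto a^p$ is non-increasing. Since $s_p(A)=\max_i\sum_j|A_{ij}|^p$ is, for each fixed row, a sum of such powers, pointwise monotonicity in the exponent propagates through the inner sum and then through the outer maximum, giving $s_{p_2}(A)\le s_{p_1}(A)$ for every $0\le p_1\le p_2\le 2$. The only hypothesis this argument really consumes is that every entry lies in the unit disk, $|A_{ij}|\le 1$; deriving this from the spectral condition is the conceptual heart of the lemma, because $s_p$ is an \emph{entrywise} quantity whereas $\kappa^{-1}\le\lambda_i\le 1$ is a \emph{spectral} one.

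First I would bridge the spectral and entrywise pictures. Because $A$ is Hermitian with spectrum contained in $[\kappa^{-1},1]\subset(0,1]$, it is positive definite and its operator norm equals its largest eigenvalue, so $\|A\|=\lambda_{\max}\le 1$. For any pair of indices the estimate $|A_{ij}|=|\langle e_i,A e_j\rangle|\le\|A\|\le 1$ then holds, yielding $|A_{ij}|\le 1$ for all $i,j$. For the diagonal matrices used in Corollary \ref{cor:mainresult} this step is immediate, but the general Hermitian case genuinely needs the operator-norm estimate to control the off-diagonal entries.

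With the entry bound in hand I would run the three comparisons in turn. Fixing a row $i$ and writing $r_i(p)=\sum_j|A_{ij}|^p$, so that $s_p(A)=\max_i r_i(p)$, the bound $|A_{ij}|\le 1$ together with $2>2-\epsilon>1$ gives $|A_{ij}|^{2}\le|A_{ij}|^{2-\epsilon}\le|A_{ij}|$ termwise; summing over $j$ yields $r_i(2)\le r_i(2-\epsilon)\le r_i(1)$, and passing to the maximum over $i$ gives $s_2(A)\le s_{2-\epsilon}(A)\le s_1(A)$. For the final inequality I would use boundedness directly: $r_i(1)=\sum_{j:A_{ij}\ne 0}|A_{ij}|\le\#\{j:A_{ij}\ne 0\}=r_i(0)$ under the convention that $s_0$ counts nonzero entries, and again take the maximum to obtain $s_1(A)\le s_0(A)$.

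The step I expect to deserve the most care is the $p=0$ endpoint, since $a^0$ is not the continuous limit of $a^p$ but the indicator of $a\ne 0$, so the clean power-monotonicity cannot be invoked verbatim there and $s_1\le s_0$ must instead be read off from $|A_{ij}|\le 1$ as above. I would also flag that Corollary \ref{cor:genHol}, which supplies only the normalized comparison $\|A_i\|_{p_2}\le\|A_i\|_{p_1}$, does not by itself deliver the unnormalized ordering: passing from $\|A_i\|_{p_2}^{p_2}\le\|A_i\|_{p_1}^{p_2}$ to $\|A_i\|_{p_1}^{p_1}$ would require $\|A_i\|_{p_1}\le 1$, which fails once the sparsity is large. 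It is precisely the entrywise bound $|A_{ij}|\le 1$, and not the Hölder inequality alone, that makes the lemma go through.
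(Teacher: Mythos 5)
Your proof is correct, but it takes a genuinely different route from the paper's. The paper's own proof runs through Corollary \ref{cor:genHol}: it first writes the normalized ordering $\|A_i\|_2\le\|A_i\|_{2-\epsilon}\le\|A_i\|_1$ for each row and then, citing only that $A$ is Hermitian, raises each term to its own exponent to obtain $s_2(A)\le s_{2-\epsilon}(A)\le s_1(A)$; the endpoint $s_1(A)\le s_0(A)$ appears in the statement but is not argued in the displayed chain. You instead derive the entrywise bound $|A_{ij}|\le\|A\|=\lambda_{\max}\le 1$ from the spectral hypothesis and apply termwise monotonicity of $a\mapsto a^p$ on $[0,1]$, summing and maximizing, with a separate counting argument at the $p=0$ endpoint. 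This is more than a stylistic difference: as you correctly diagnose, the normalized ordering alone does not yield the unnormalized one, since passing from $\|A_i\|_{p_2}\le\|A_i\|_{p_1}$ to $\|A_i\|_{p_2}^{p_2}\le\|A_i\|_{p_1}^{p_1}$ with $p_2>p_1$ requires $\|A_i\|_{p_1}\le 1$, which can fail for a spectrally bounded Hermitian whose rows have many nonzero entries (e.g.\ $\|A_i\|_1$ can grow like $\sqrt{s}$). The entrywise bound you make explicit is thus exactly the content hiding behind the paper's phrase ``Since $A$ is Hermitian,'' and your handling of $p=0$ closes a step the paper leaves implicit. In short, your argument is the more elementary and the more complete of the two; the paper's buys brevity by leaning on the H\"older corollary, but that corollary by itself is not sufficient for the conclusion.
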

	\begin{proof}
		By Corollary \ref{cor:genHol}, for all $i\in[m]$
		\[
		\max_{i\in[n]}\left(\sum_{j=1}^n|A_{ij}|^2\right)^{1/2}\leq\max_{i\in[n]}\left(\sum_{j=1}^n|A_{ij}|^{2-\epsilon}\right)^{1/(2-\epsilon)}\leq\max_{i\in[n]}\left(\sum_{j=1}^n|A_{ij}|^1\right)^{1/1}\,.
		\]
		Since $A$ is Hermitian, it follows that
		\[
		\left(\max_{i\in[n]}\left(\sum_{j=1}^n|A_{ij}|^2\right)^{1/2}\right)^2\leq
		\left(\max_{i\in[n]}\left(\sum_{j=1}^n|A_{ij}|^{2-\epsilon}\right)^{1/(2-\epsilon)}\right)^{2-\epsilon}\leq
		 \max_{i\in[n]}\sum_{j=1}^n|A_{ij}|\,,
		\]
		so the desired inequality follows.
	\end{proof}
\end{document}